\documentclass[showpacs,amsmath,amssymb,nofootinbib,prd]{revtex4}
\usepackage{ifpdf}
\ifpdf
  \usepackage[%
    colorlinks=true,%
    hypertexnames=false,%
    backref=page,%
    citecolor=blue,%
    pdftitle={Asymptotics of Fixed Point Distributions for Inexact Monte Carlo
      Algorithms},%
    pdfdisplaydoctitle,%
    pdfauthor={M. A. Clark and A. D. Kennedy},%
    pdfkeywords={Algorithms, Quantum Field Theory, Lattice QCD},%
    pdfstartview={FitH},%
    pdfpagemode={FullScreen}]{hyperref}
\fi
%
%
%
\font\tenbb=msbm10
\font\sevenbb=msbm7
\font\fivebb=msbm5
\newfam\bbfam
\def\bb{\fam\bbfam\tenbb}
\textfont\bbfam=\tenbb
\scriptfont\bbfam=\sevenbb
\scriptscriptfont\bbfam=\fivebb
%
%
%
\def\sqr#1#2{{\vcenter{\hrule height.#2pt
   \hbox{\vrule width.#2pt height#1pt \kern#1pt
      \vrule width.#2pt}
   \hrule height.#2pt}}}

\def\bsqr#1#2{{\vrule width #1pt height#2pt}}
\def\bsquare{{\mathchoice\bsqr66\bsqr66\bsqr33\bsqr33}}
%
%
\def\Sest{{\Sigma}}		    
\def\L{{\cal L}}		    
\def\A{{\cal A}}		    
\def\T{{\cal T}}		    
\def\I{{\cal I}}		    
\def\M{{\cal M}}		    
\def\dt{{\delta\tau}}		    
\def\trjlen{\tau}		    
\def\DS{{\Delta S}}		    
\def\dH{{\delta H}}		    
\def\dJ{{\tr\ln U_*}}		    
\def\dbar{{\bar\delta}}
\def\twovector[#1,#2]{\left(\begin{array}{c} #1 \\ #2 \end{array}\right)}
\def\comp{\circ}		    
\def\badbreak{\penalty1000}
\newtheorem{theorem}{Theorem}
\newtheorem{lemma}{Lemma}
\newtheorem{definition}{Definition}
\newtheorem{corollary}{Corollary}
\newenvironment{proof}{{\par\em Proof.}}{\badbreak$\;\bsquare$\smallskip}
\def\asymp{\sim}		    
\def\tr{\mathop{\rm tr}}	    
\def\ad{\mathop{\rm ad}}	    
\def\deg{\mathop{\rm deg}}	    
\def\dd#1#2{{\mathchoice{\partial#1\over\partial#2}%
  {\partial#1\over\partial#2}%
  {\partial#1\!/\!\partial#2}%
  {\partial#1\!/\!\partial#2}}}	    
\def\ddq{\dd{}q}		    
\def\ddp{\dd{}p}		    
\def\ddd#1#2#3{{\mathchoice{\partial^2#1\over\partial#2\partial#3}%
  {\partial^2#1\over\partial#2\partial#3}%
  {\partial^2#1\!/\!\partial#2\partial#3}%
  {\partial^2#1\!/\!\partial#2\partial#3}}} 
\def\implies{\Rightarrow}	    
\def\defn{\equiv}		    
%
%
\def\rational#1#2{{\mathchoice{\textstyle{#1\over#2}}%
  {\scriptstyle{#1\over#2}}{\scriptscriptstyle{#1\over#2}}{#1/#2}}}
\def\half{\rational12}		    
\def\third{\rational13}		    
\def\quarter{\rational14}	    
\def\rmsub#1#2{#1_{\mbox{\tiny #2}}} 
%
%
\def\Q{{\bb Q}}			    
\def\R{{\bb R}}			    
\def\Z{{\bb Z}}			    
%
\def\M{\mathcal{M}}		    
\def\Seff{\mathop{S_{\rm eff}}}	    
\def\Nf{\rmsub{N}{f}}		    
\def\Sf{\rmsub{S}{F}}		    
\def\Sg{\rmsub{S}{G}}		    

\def\deltaH{\delta H}

\def\minv{\M^{-1}}
\def\dmdu{\frac{\partial\M}{\partial U}}

\hyphenation{pseudo-fermion}

\begin{document}

\title{Asymptotics of Fixed Point Distributions for Inexact Monte Carlo
  Algorithms}

\author{M. A. Clark}
  \email{mikec@bu.edu}
  \affiliation{%
    Center for Computational Sciences, Boston University,\\
    3 Cummington Street, Boston, MA 02215, United States of America}%
\author{A. D. Kennedy}
  \email{adk@ph.ed.ac.uk}
  \affiliation{
    School of Physics, The University of Edinburgh,\\
    Mayfield Road, Edinburgh, EH9 3JZ, United Kingdom}%
\date{\today}
\begin{abstract}
  \noindent We introduce a simple general method for finding the equilibrium
  distribution for a class of widely used inexact Markov Chain Monte Carlo
  algorithms.  The explicit error due to the non-commutivity of the updating
  operators when numerically integrating Hamilton's equations can be derived
  using the Baker--Campbell--Hausdorff formula.  This error is manifest in the
  conservation of a ``shadow'' Hamiltonian that lies close to the desired
  Hamiltonian.  The fixed point distribution of inexact Hybrid algorithms may
  then be derived taking into account that the fixed point of the momentum
  heatbath and that of the molecular dynamics do not coincide exactly.  We
  perform this derivation for various inexact algorithms used for lattice QCD
  calculations.  {\parfillskip=0pt\par}
\end{abstract}

\pacs{02.50.Tt, 02.70.Ns, 11.15.Ha, 45.20.Jj}
\maketitle

\section{Introduction}

The first algorithms used to simulate lattice QCD with dynamical quarks fell
into three classes: Monte Carlo \cite{pseudofermions,Weingarten:1981hx},
Langevin \cite{batrouni85a}, and Molecular Dynamics (microcanonical)
\cite{callaway82a,polonyi83a} methods.  The first two involve taking many
essentially independent small random steps, leading to a dynamical critical
exponent \(z=2\) typical of Brownian motion, where this exponent relates the
autocorrelation time of the Markov process to the correlation length of the
physical system.  The steps are small either in the sense of making only a
small global change in the gauge field or by making a larger change locally,
such as updating a single link variable.  The third suffers from the problem of
not being ergodic in general, and therefore not necessarily generating the
correct distribution of configurations; nevertheless, although the updates are
also built out of many small steps these steps are correlated so as to lead to
a dynamical critical exponent \(z=1\), and it can thus be considered as a
``large step'' algorithm.  The last two algorithms also depend upon a step size
parameter \(\dt\), and are only ``exact'' in the limit \(\dt\to0\).

It was then realized \cite{duane85a} that by combining Molecular Dynamics with
a momentum refreshment heatbath an ergodic large step ``Hybrid'' algorithm
resulted, although it still suffered from step size errors in the equilibrium
distribution.  Shortly afterwards it was found that by combining such Hybrid
updates with a Metropolis Monte Carlo acceptance test these step size errors
could be completely eliminated \cite{duane87a}: this is called the Hybrid Monte
Carlo algorithm.  Both the idea of combining Molecular Dynamics with momentum
refreshment \cite{andersen80a} and that of correcting the Langevin algorithm by
a Metropolis test \cite{rossky78a} had been introduced previously in other
fields.

The introduction of the Hybrid Monte Carlo algorithm did not end the use of
inexact algorithms for two reasons: firstly some people believe that the volume
dependence of the cost of the Hybrid Monte Carlo algorithm outweighs the
advantage of its exactness, and secondly they wanted to simulate with what are
now called ``rooted staggered quarks'' for which there is no explicit local
form for the action, but instead the square- or fourth-root of the
corresponding determinant is required.  A way of doing this with step size
errors of \(O(\dt^2)\) was introduced with the \(R\) algorithm
\cite{gottlieb87a}.  Only recently has the RHMC algorithm
\cite{Clark:2006wp,Clark:2006fx} provided an efficient exact alternative.

The purpose of the present paper is to analyze the inexact algorithms mentioned
above, as well as several other interesting variants.  The approach we take is
new, namely we view all the algorithms as a combination of a Molecular Dynamics
trajectory with a momentum refreshment; from this point of view the Langevin
algorithm is just a single step Molecular Dynamics trajectory.  Although in
principle, any numerical integration scheme could be used for the Molecular
Dynamics integration in practice all the algorithms use symmetric symplectic
integrators (or closely related integrators).  Such integrators have several
remarkable properties, such as being reversible and area preserving, which are
required for the exact Hybrid Monte Carlo algorithm.  We make use of the
remarkable property that all symplectic integrators exactly conserve a ``shadow
Hamiltonian'' close to the desired one for small enough step sizes to produce a
uniform and simple asymptotic expansion for the equilibrium (fixed point)
distribution of the corresponding inexact Markov chains.

We then turn to the class of inexact algorithms that use a noisy estimate of
the force due to the fermionic determinant.  We introduce the pseudofermionic
\(\chi\) algorithm as an aid to establishing that the \(R\) algorithm, which
may be considered as an interpolation between the \(\chi\) and \(R_0\)
algorithms, has only \(O(\dt^2)\) errors in its equilibrium distribution.  We
also establish that one our original motivations --- namely trying to find an
variant of the \(R\) algorithm with errors falling as a higher power of the
step size --- is almost certainly doomed to failure as the unwanted noise
contributions to the error are necessarily positive.

This paper is organised as follows: in \(\S\)\ref{sec:sympl} we use the
Baker--Campbell--Hausdorff (BCH) formula to derive the form of the ``shadow''
Hamiltonian that is conserved by various symplectic integrators.  A brief
description of Hybrid stochastic algorithms is given in \(\S\)\ref{sec:hybrid},
and the resulting fixed points of these algorithms are derived in
\(\S\)\ref{sec:equilibrium}.  We then turn our attention to noisy algorithms,
specifically as applied to fermion theories, in \(\S\)\ref{sec:noise}.  For the
reader's convenience we collect as appendices derivations of some important
results that are rather difficult to find in an accessible form in the
literature.

\section{Symplectic Integrators}
\label{sec:sympl}

We are interested in finding the classical trajectory in phase space of a
system described by a Hamiltonian \(H(q,p) = T(p) + S(q) = \half p^2 + S(q)\).
It has been known for a long time that the leapfrog integration scheme has many
desirable properties, and higher-order generalizations have been discovered
several times in different fields~\cite{campostrini89a,creutz89a,%
channel90a,yoshida90a,forest90a,suzuki90a,suzuki91a,gladman91a,sexton92a,%
yoshida93a,hut95a}.  The basic idea of such a {\em symplectic integrator} is to
write the time evolution operator as
\begin{eqnarray*}
  \exp\left(\trjlen{d\over dt}\right)
  &\defn& \exp\left(\trjlen\left\{\dd pt\ddp + \dd qt\ddq\right\}\right)\\
  &=& \exp\left(\trjlen\left\{- \dd Hq\ddp + \dd Hp\ddq\right\}\right)
  \defn e^{\trjlen\hat H} \\
  &=& \exp\left(\trjlen\left\{-S'(q)\ddp + T'(p)\ddq\right\}\right),
\end{eqnarray*}
where \(\hat H\) is the {\em Hamiltonian vector field}.  In the second step we
have made use of Hamilton's equations, and thus implicitly the fundamental
2-form\footnote{For simplicity of presentation we shall consider here only the
case where the fundamental symplectic 2-form is \(\omega=dq\wedge dp\).  For
gauge theories we need to use \(\omega=-\sum_i d(p_i\theta_i)\) where the
\(\theta_i\) are the left-invariant Maurer--Cartan forms on a Lie group
manifold~\cite{kennedy88b}, but this generalization is straightforward.}
\(\omega=dq\wedge dp\).

From the structure of our Hamiltonian, namely the fact that the kinetic energy
\(T\) is a function only of \(p\) and the potential energy \(S\) is a function
only of~\(q\), it follows that \(\exp\left[-\trjlen S'(q)\ddp\right]\) and
\(\exp\left[\trjlen T'(p)\ddq\right]\) are trivial\footnote{For gauge fields
the \(p\) lie in a Lie algebra and the \(q\) in the corresponding Lie group, so
the evaluation is straightforward if not entirely trivial.} to evaluate.  Let
us write
\begin{equation}
  Q \defn T'(p) \ddq
    \qquad\hbox{and}\qquad P \defn -S'(q) \ddp,
  \label{eq:PQdef}
\end{equation}
so that by Taylor's theorem
\begin{eqnarray}
  e^{tQ}: f(q,p) & \mapsto & f\left(q + tT'(p), p\right),\qquad \nonumber \\
  e^{tP}: f(q,p) & \mapsto & f\left(q, p - tS'(q)\right);
  \label{eq:PQmap}
\end{eqnarray}
then from the BCH formula (\ref{eq:bch}) we find that the {\em QP} symplectic
integrator leads to the time evolution given by
\begin{eqnarray*}
  \lefteqn{ \left(e^{\dt Q}e^{\dt P}\right)^{\trjlen/\dt} } & & \\
  & = &\left(\exp\left[
        (Q+P)\dt + \half[Q,P]\dt^2 + O(\dt^3)
      \right]\right)^{\trjlen/\dt} \qquad \\
    &=& \exp\left[
          \trjlen\left((Q+P) + \half[Q,P]\dt + O(\dt^2)\right)
        \right]\\
    &\defn& e^{\trjlen\widehat{H'}} = e^{\trjlen(Q+P)} + O(\dt).
\end{eqnarray*}
In addition to conserving energy to \(O(\dt)\) such symplectic integrators are
manifestly area preserving.

The BCH formula tells us more than that there is an area-preserving approximate
integration scheme for Hamilton's equations: it tells us that for each
symplectic integrator there exists a ``shadow'' Hamiltonian \(H'\) close to the
original one, which is {\em exactly conserved}.  For the {\em QP} integrator
the actual trajectories through phase space are integral curves of the vector
field\footnote{Observe that \(H'\) is linear in \(\frac{\partial}{\partial q}\)
and \(\frac{\partial}{\partial p}\) because all but that first term in the BCH
expansion are commutators.}
\begin{widetext}
\begin{eqnarray*}
  \widehat{H'} &=& Q + P + \half[Q,P]\,\dt
      + \rational1{12}\Bigl\{[Q,[Q,P]] - [P,[Q,P]]\Bigr\}\,\dt^2
      - \rational1{24}[P,[Q,[Q,P]]] \dt^3 \\
    && 
      \begin{array}{lrlrlrll}
        + \rational1{720}\Bigl\{
        & - 4 & [P,[Q,[Q,[Q,P]]]]
        & - 6 & [[Q,P],[Q,[Q,P]]]
        & + 4 & [P,[P,[Q,[Q,P]]]] & \\
        & - 2 & [[Q,P],[P,[Q,P]]]
        & - & [Q,[Q,[Q,[Q,P]]]]
	& + & [P,[P,[P,[Q,P]]]]
        & \Bigr\} \,\dt^4 + O(\dt^5)
      \end{array} \\
    &=& \left\{p\ddq - S'\ddp\right\}
      + \half\left\{S'\ddq - pS''\ddp\right\}\dt
      + \rational1{12}\left\{2pS''\ddq - (p^2S'''+2S'S'')
      \ddp\right\}\dt^2 \\
    &&\quad + \rational1{12}\left\{S'S''\ddq
        - p(S'S'''+{S''}^2)\ddp\right\}\dt^3 
      \quad + \rational1{720}\Biggl\{
        4\left(-p^3S^{(4)} + p(6{S''}^2+3S'S''')\right)\ddq \\
    &&\qquad\qquad
        +\left(p^4S^{(5)} - p^2(30S'''S''+6S^{(4)}S')
          - 12S'(2{S''}^2+S'S''')\right)\ddp\Biggr\}\dt^4 + O(\dt^5)
\end{eqnarray*}
\end{widetext}
from equation~(\ref{eq:bch}).  In Appendix~\ref{sec:poisson} we show that the
vector field \(\widehat{H'}\) is a Hamiltonian vector field, that is there
exists a Hamiltonian \(H'\) such that \[\widehat{H'} = \dd{H'}p\ddq -
\dd{H'}q\ddp,\] and that \(H'\) may be obtained explicitly by replacing the
commutators in the BCH expansion (\ref{eq:bch}) of \(\ln(e^Te^S)\) by Poisson
brackets\footnote{The Poisson brackets define a Lie algebra since they are
manifestly antisymmetric and satisfy the Jacobi relation \(\{A,\{B,C\}\} +
\{B,\{C,A\}\} + \{C,\{A,B\}\} = 0\), q.v., Appendix~\ref{sec:poisson}.} of the
form \[\{A,B\} \defn \dd{A}p\dd{B}q - \dd{A}q\dd{B}p;\] so
\begin{widetext}
\begin{eqnarray*}
  H' &=& H + \half\{T,S\}\,\dt
    + \rational1{12}\Bigl(\{T,\{T,S\}\} - \{S,\{T,S\}\}\Bigr)\,\dt^2 -\rational1{24}\{S,\{T,\{T,S\}\}\} \dt^3 \\
    && \begin{array}{lrlrll}
         + \rational1{720} \Bigl(
         & - 4 & \{S,\{T,\{T,\{T,S\}\}\}\}
         & - 6 & \{\{T,S\},\{T,\{T,S\}\}\} & \\
         \vphantom{\Bigl(}
         & + 4 & \{S,\{S,\{T,\{T,S\}\}\}\}
         & - 2 & \{\{T,S\},\{S,\{T,S\}\}\} & \\
         & - & \{T,\{T,\{T,\{T,S\}\}\}\}
         & + & \{S,\{S,\{S,\{T,S\}\}\}\}
         & \Bigr) \,\dt^4 + O(\dt^5)
      \end{array} \\
    &=& H + \half pS'\,\dt + \rational1{12}\left(p^2S''+{S'}^2\right)\,\dt^2
      + \rational1{12}pS'S''\,\dt^3  \\
      & & + \rational1{720}\left(-p^4S^{(4)} + p^2(6S'S'''+12{S''}^2)
      + 12{S'}^2S''\right)\,\dt^4 + O(\dt^5).
\end{eqnarray*}
Note that \(H'\) cannot be written as the sum of a \(p\)-dependent kinetic term
and a \(q\)-dependent potential term.  This means that any hope that one could
exactly ``invert'' this relation to find a Hamiltonian whose approximate
integral exactly conserves \(H\) is forlorn.

\subsection{Symmetric Integrators}

It is immediately apparent that we can do better than this by using a symmetric
symplectic integrator (\ref{eq:bchsym}) which just gives the familiar {\em PQP}
leapfrog scheme for which
\begin{displaymath}
  \begin{array}{rl}
    U_0(\dt)^{\trjlen/dt}
    &= \left(e^{\dt\,P/2}e^{\dt\,Q}e^{\dt\,P/2}\right)^{\trjlen/\dt} \\[0.8ex]
    &= \left(\exp\left[
          (P+Q)\,\dt - \rational1{24}\Bigl([P,[P,Q]] + 2[Q,[P,Q]]\Bigr)\,\dt^3
            + O(\dt^5)\right]\right)^{\trjlen/\dt} \\[0.8ex]
    &= \exp\left[
            \trjlen\left((P+Q)
              - \rational1{24}\Bigl([P,[P,Q]] + 2[Q,[P,Q]]\Bigr)\,\dt^2
                + O(\dt^4)\right)\right] \\[0.8ex]
    &= e^{\trjlen(P+Q)} + O(\dt^2).
  \end{array}
\end{displaymath}
The {\em PQP} integrator exactly conserves the Hamiltonian
\[  H' = H + \rational1{24}\left\{2p^2S''-{S'}^2\right\}\,\dt^2
+ \rational1{720}\left\{-p^4S^{(4)} + 6p^2(S'S'''+2{S''}^2)
- 3{S'}^2S''\right\}\,\dt^4 + O(\dt^6),
\label{eq:evol}\]
whereas the {\em QPQ} integrator conserves
\[ H' = H + \rational1{24}\left\{-p^2S''+2{S'}^2\right\}\,\dt^2 
+ \rational1{5760}\left\{7p^4S^{(4)} - 24p^2(3S'S'''+{S''}^2)
+ 96{S'}^2S''\right\}\,\dt^4 + O(\dt^6).\]
\end{widetext}
It is possible to construct symplectic integrators of arbitrarily high order;
some aspects of this theory are discussed in Appendix~\ref{sec:hossi}.

\section{Hybrid Stochastic Algorithms}
\label{sec:hybrid}

\subsection{The Hybrid algorithm}

The Hybrid algorithm~\cite{duane85a,duane85b,duane86b}, so called because it is
a hybrid of the Langevin \cite{batrouni85a} and Molecular Dynamics algorithms,
constructs a Markov process on a ``fictitious'' phase space consisting of the
field variables of interest (the coordinates) and a set of corresponding
``fictitious'' momenta.  It should be emphasized that these momenta have
nothing to do with the momenta which occur in the field equations of motion and
canonical quantization relations; they are just introduced to define a
classical dynamics of the fields in a new ``fictitious'' time dimension.  If we
are considering a four dimensional field theory, for example, then our
``fictitious'' dynamics takes place in a new fifth time dimension.  To this end
we introduce a Hamiltonian \(H(q,p)\) and a fundamental symplectic \(2\)-form
as in section~\ref{sec:sympl}.  We then iterate three Markov steps, each of
which has the distribution \(e^{-H}\) as an approximate fixed point and which,
when taken together, are ergodic.\footnote{In some cases we should choose
trajectory lengths from some random distribution for this to be true.}
\begin{itemize}

\item The first such step is {\em momentum refreshment} whereby the fictitious
momenta are chosen from a Gaussian heatbath.\footnote{A generalization of this
is the ``partial momentum refreshment'' \cite{kennedy91a} step used in the
second-order Langevin \cite{horowitz87,horowitz90a} or Kramer's algorithm
\cite{kuti88b,beccaria94a,beccaria94b}.}

\item The second step is to integrate Hamilton's equations using an approximate
integrator for some length of ficititious time~\(\trjlen\), and then to flip
the momenta
\begin{displaymath}
  \twovector[q,p] \mapsto U(\trjlen)\twovector[q,p] \defn \twovector[q'',p'']
    \mapsto \twovector[q'',-p''].
\end{displaymath}
It has the distribution \(e^{-H'}\) as an exact fixed point, and it thus
approximately preserves the desired Hamiltonian \(H\).  If the integrator is
symplectic then this step is area-preserving, and if it is also symmetric then
the step is reversible (this is why we incorporate the final momentum flip).

\item The third step is a {\em momentum flip}, \(F:p\mapsto-p\), which
obviously preserves the Gaussian distribution of momenta.  Since this step is
immediately followed by a momentum refreshment it can be ignored in all cases
but Kramer's algorithm \cite{horowitz87,horowitz90a,kuti88b,beccaria94a,%
beccaria94b}.
\end{itemize}

\subsection{Langevin Algorithm}

The Langevin equation \cite{langevin08a,batrouni85a} is just a special case of
the Hybrid algorithm for a single leapfrog integration step with {\em PQP}
ordering: in other words we use the symplectic integrator \(e^{\dt P/2} e^{\dt
Q} e^{\dt P/2}\) using the notation of equation~(\ref{eq:PQdef}).  This
operator maps \((q,p)\mapsto(q',p')\) where \(q' = q + p\,\dt - \half
S'(q)\dt^2\), which is the familiar Langevin equation when the the Gaussian
distributed momenta \(p\) are written as Gaussian noise \(\eta\) and the time
step as \(\epsilon\defn\dt^2\):
\begin{displaymath}
  q' = q - \half S'(q)\epsilon + \eta\sqrt\epsilon.
\end{displaymath}
The {\em PQP} symplectic integrator has the amusing property of giving rise to
a simple closed form Langevin equation, but there are many other variants
too.  For example, the {\em QPQ} symplectic integrator \(e^{\dt Q/2} e^{\dt P}
e^{\dt Q/2}\) leads to the equation
\begin{widetext}
\begin{equation}
  q' = q + p\,\dt - \half S'\dt^2 - \quarter S''p\dt^3 -
  \rational1{16}S'''p^2\dt^4 - \rational1{96}S^{(4)}p^3\dt^5 -
  \rational1{768}S^{(5)}p^4\dt^6 + O(\dt^7),
  \label{eq:QPQevol}
\end{equation}
the lowest-order {\em PQP} Campostrini wiggle (q.v., Appendix~\ref{sec:hossi})
gives
\begin{eqnarray*}
  q' &=& q + p\,\dt - \half S'\dt^2 - \rational16S''p\dt^3
      + \rational1{24}[S''S'-S'''p^2]\dt^4 \\
    && \begin{array}{rl}
         \rational1{432}\Bigl[\left(
           6(3 + \root3\of2 + \root3\of4)S'''S'
           - 3(4 + 4\root3\of2 + 3\root3\of4){S''}^2\right)p \\
           - (3 + \root3\of2 + \root3\of4)S^{(4)}p^3\Bigr]\dt^5 + O(\dt^6)
      \end{array}
\end{eqnarray*}
and the second-order {\em PQP} Campostrini wiggle gives
\begin{eqnarray*}
  q'& = &q + p\,\dt - \half S'\dt^2 - \rational16S''p\dt^3
      + \rational1{24}[S''S'-S'''p^2]\dt^4 \\
    &&+\rational1{120}\left[
      \left(3S'''S' + {S''}^2\right)p - S^{(4)}p^3\right]\dt^5 \\
    &&+ \rational1{720}\left[
      - (3S'''S' + {S''}^2)S'
      + \left(5S'''S'' + 6S^{(4)}S'\right)p^2
      - S^{(5)}p^4\right]\dt^6 + O(\dt^7)
\end{eqnarray*}
which agrees with the exact evolution operator \(e^{\dt\,\hat H}\) up to
\(O(\dt^7)\).
\end{widetext}

\section{Equilibrium Distributions}
\label{sec:equilibrium}

We now want to address the question as to what fixed point (equilibrium)
distribution is produced by the Langevin \cite{langevin08a,fokker14a,planck17a}
and Hybrid algorithms.  The existence of such a fixed point distribution and
its uniqueness follow from their ergodicity.  We are, of course, looking for a
fixed point distribution of \(q\) alone, and not of both \(q\) and \(p\).

Since in general the momentum dependence of \(H'\) is not exactly Gaussian we
do not expect either \(e^{-H}\) or \(e^{-H'}\) to be an exact fixed point of a
Hybrid Markov process.  As we expect the equilibrium distribution to be close
to the desired one, \(e^{-S}\), it seems reasonable to parameterize it as
\(e^{-(S+\DS)}\).  The condition that this is a fixed point is
\begin{eqnarray*}
\lefteqn{ e^{-\bigl(S(q') + \DS(q')\bigr)} } \qquad & &\\
    & = & \int dq\, e^{-\bigl(S(q)+\DS(q)\bigr)} \int dp\, e^{-\half p^2}
      \delta(q' - q'')\\
    & = & \int dq\,dp\, e^{ - H(q,p) - \DS(q)} \delta(q' - q''),
\end{eqnarray*}
where \(\twovector[q'',p''] = U(\trjlen)\twovector[q,p]\) is the phase space
point reached at the end of a trajectory of length~\(\trjlen\).

To solve this equation we change variables to \((q'',p'')\), whence we obtain
\begin{eqnarray*}
\lefteqn{e^{-\bigl(S(q') + \DS(q')\bigr)} } \qquad & & \\
& = & \int dq''\,dp''\, (\det U_*)^{-1} e^{- (H
+ \DS)\comp U^{-1}} \delta(q' - q'').
\end{eqnarray*}
If \(U\) is area-preserving then its Jacobian \(\det U_* \defn\det
\dd{(q'',p'')}{(q,p)} \defn e^\dJ = 1\), and if it is reversible then
\(U^{-1} = F\comp U\comp F\) where \(F\) is the momentum-flip
operation \(F: p\mapsto-p\).

In the general case we introduce the operation \(\dbar: \Omega
\mapsto\Omega\comp U^{-1} - \Omega\comp F\comp U\comp F\), which is zero if
\(U\) is reversible, so \[(H + \DS)\comp U^{-1} = (H + \DS)\comp F\comp U\comp
F + \dbar(H + \DS).\] Next we note that \(H\) is an even function of \(p\),
namely \(H\comp F=H\), and \(\DS\) does not depend on \(p\), so \(\DS\comp
F=\DS\) also, hence \[(H + \DS)\comp U^{-1} = (H + \DS)\comp U\comp F + \dbar(H
+ \DS).\] In terms of the operator \(\delta:\Omega\mapsto \Omega\comp U\comp F
- \Omega\comp F\), which measures the change in some quantity \(\Omega\) over a
trajectory,\footnote{An extra momentum flip is included for convenience.}
\begin{eqnarray*}
  \lefteqn{ (H + \DS)\comp U^{-1} } \qquad & & \\
  &=& (H + \DS)\comp F + \delta(H+\DS) + \dbar(H+\DS) \\
  &=& H + \DS + (\delta + \dbar)(H+\DS).
\end{eqnarray*}
Therefore
\begin{eqnarray*}
  \lefteqn{ e^{-\bigl(S(q') + \DS(q')\bigr)} } & & \\
    &=\int dq''\,dp''\, e^{- (H + \DS)} e^{- (\delta + \dbar)(H + \DS)
      - \dJ} \delta(q' - q'') & \\
    &=e^{-\bigl(S(q') + \DS(q')\bigr)} 
      \int dp''\, e^{-\half p''^2} e^{- (\delta + \dbar)(H+\DS) - \dJ}, &
\end{eqnarray*}
and thus we obtain the condition
\begin{equation}
  \left\langle e^{- (\delta + \dbar)(H + \DS) - \dJ} \right\rangle_p = 1.
\label{eq:req}
\end{equation}

Since \(H\) is extensive so is \(\dH\), and thus we can show order by order in
\(\dt\) that \(\DS\) is extensive too.

\subsection{Langevin Algorithm}

For a reversible and area-preserving integration scheme, such as a symmetric
symplectic integrator, the equilibrium distribution must satisfy
\begin{equation}
  \left\langle e^{-\delta(H + \DS)} \right\rangle_p = 1.
  \label{eq:rfp}
\end{equation}
For the {\em PQP} Langevin integrator we have
\begin{eqnarray*}
  \dH & = & \rational1{12}\left\{-p^3S''' + 3pS'S''\right\}\,\dt^3
    + \rational1{24}\left\{-p^4S^{(4)}\right. \\
    & & \left.+ 3p^2(2S'S'''+{S''}^2) - 3{S'}^2S''\right\}\,\dt^4 + O(\dt^5),
\end{eqnarray*}
and
\begin{equation}
  \delta\DS = p\DS'\,\dt + \left\{\half p^2\DS'' - \half S'\DS'\right\}\,\dt^2
    + O(\dt^3).
\label{eq:dDS-Lang}
\end{equation}
If we expand the integrand of equation~(\ref{eq:rfp}) we obtain to leading
non-vanishing order in~\(\dt\)
\begin{displaymath}
  \left\langle \dH + \delta\DS \right\rangle_p \asymp 0,
\end{displaymath}
and thus
\begin{eqnarray*}
  0 & \asymp & -\half\left\{S'\DS' - \DS''\right\}\dt^2 \\
  && -\rational18\left\{-{S''}^2+S''{S'}^2+S^{(4)}-2S'''S'\right\}\,\dt^4 \\ 
  && + O(\dt^6).
\end{eqnarray*}
This differential equation has an integrating factor of \(e^{-S}\),
\begin{displaymath}
  \left(\left[\DS' + \quarter(S''S'-S''')\,\dt^2\right] e^{-S}\right)' 
  \asymp 0,
\end{displaymath}
so
\begin{displaymath}
  \DS' + \quarter(S''S'-S''')\,\dt^2 \asymp Ke^S
\end{displaymath}
for some constant~\(K\).  As we are only determining the asymptotic expansion
of \(\DS\) we set \(K=0\), whence we obtain
\begin{displaymath}
  \DS' \asymp \quarter\left\{S'''-S'S''\right\}\,\dt^2.
\end{displaymath}
Up to an constant, which is fixed by the normalization of \(e^{-(S+\DS)}\),
this gives
\begin{displaymath}
  \DS \asymp \rational18\left\{2S''-{S'}^2\right\}\,\dt^2 + O(\dt^4)
\end{displaymath}
for the equilibrium distribution corresponding to the {\em PQP} Langevin
equation.

It is most important to realise that this is only an asymptotic expansion, and
the exact solution for \(\DS\) may also involve terms which are exponentially
small in the integration step size~\(\dt\).  One way to understand this is to
observe that equation~(\ref{eq:rfp}) is a Gaussian integral over the
momenta~\(p\), and that the domain of integration must include momenta \(p\gg
1/\dt\).  In particular this means that the statement made in
\cite{batrouni85a} that the shift in the equilibrium distribution corresponds
only to the addition of irrelevant operators, and that therefore the \(\dt\)
errors can be ignored if one computes quantities in the continuum limit, is
erroneous because of the existence of these subleading relevant contributions.

If we write the asymptotic expansion as \(\DS = \sum_{n\geq2} \DS_n\,\dt^n\)
then the preceding calculation is easily extended to find that the next term
satisfies the equation
\begin{equation}
  \DS_4' \asymp
    \rational1{48}\left(S'''{S'}^2 + 2S''S''' - 2S^{(4)}S' + S^{(5)}\right).
  \label{eq:DS4}
\end{equation}
While this does not give a closed-form expression for \(\DS_4\) for arbitrary
\(S\), it is obvious that if \(S\) is a polynomial in \(q\) then \(\DS_4\) and
all the other \(\DS_n\) are too.  In general the fields \(q\) have multiple
components, and equation~(\ref{eq:DS4}) then becomes
\begin{displaymath}
  (\DS_4)_i \asymp \rational1{48}\left(S_{ijk}S_jS_k + 2S_{jk}S_{ijk}
    - 2S_{ijjk}S_k + S_{ijjkk}\right)
\end{displaymath}
with the obvious notation.

For the {\em QPQ} integrator the shift in the equilibrium distribution is
\begin{eqnarray*}
  \DS_2 &=& \rational18({S'}^2 - S''), \\
  \DS_4' &=& \rational1{384}\left(24{S''}^2S' + 8S'''{S'}^2 - 20S''S'''
  \right. \\
    & & \left. - 4S^{(4)}S' - S^{(5)}\right);
\end{eqnarray*}
whereas for the {\em PQP} Campostrini wiggle (q.v., Appendix~\ref{sec:hossi})
the leading shift satisfies
\begin{eqnarray*}
  \lefteqn{\DS_4' = \rational1{288} \left(
    2(4+3\root3\of2+2\root3\of4){S''}^2S'\right.} && \\
  && - (6+4\root3\of2+5\root3\of4)S'''{S'}^2 
  - 6(4+3\root3\of2+2\root3\of4)S''S''' \\
  && \left. + 2(6+5\root3\of2+5\root3\of4)S^{(4)}S'
  - (6+5\root3\of2+5\root3\of4)S^{(5)} \right).
\end{eqnarray*}

\subsection{Hybrid Algorithm}

For the Hybrid algorithm we cannot derive an explicit formula for the shifted
equilibrium distribution in general, but we can easily see from
equation~(\ref{eq:evol}) that the leapfrog integrator conserves a Hamiltonian
\(H'\) which differs from \(H\) by terms of \(O(\dt^2)\).  This means that
\(\dH = O(\dt^2)\), and thus we deduce from equation~(\ref{eq:rfp}) that
\(\delta\DS \asymp O(\dt^2)\).  Unlike the Langevin case considered previously
the change in \(\DS\) over trajectory has no reason not to be of the same size
as \(\DS\) itself, and thus we find that \(\DS\asymp O(\dt^2)\).

\section{Noisy Algorithms}
\label{sec:noise}

One of the main advantages of the Langevin and Hybrid algorithms is that they
can be used for some non-local actions where an unbiased stochastic estimate of
the force\footnote{We do not necessarily require a noisy estimator \(\Sest\)
for the action \(S\) itself to implement the equations of motion, but we will
require one if we wish to construct an ``exact'' algorithm, q.v.
\S\ref{sec:exactnoisy}.} \(\Sest'\) can be computed relatively cheaply, with
\(\langle\Sest'(q) \rangle_\eta=S'(q)\) where the average is over some
``noise'' \(\eta\) which is chosen independently for each step.

It is helpful to think of the dynamics as that of a system evolving in the
presence of some fixed background noise field; we shall leave this dependence
on \(\eta\) implicit except where necessary.  The noisy force estimator
corresponds to the discrete mapping \(e^{tP}\) of equation~(\ref{eq:PQmap})
becoming
\begin{displaymath}
  e^{tP}: f(q,p) \mapsto f(q,p - t\Sest'(q)),
\end{displaymath}
and we thus find that the symmetric {\em QPQ} integrator conserves the
Hamiltonian \(H' \defn H + \Delta H\) where\footnote{There are two
contributions to \(\Delta H\): the first due to the use of an approximate
integrator in the presence of the background noise field, the second due to the
noise itself.}
\begin{displaymath}
  \Delta H = \Sest - S
    + \rational1{24}\left[-p^2\Sest''+2{\Sest'}^2\right]\,\dt^2
    + O(\dt^4).
\end{displaymath}
This means that the change in \(H\) over a trajectory \((q_i,p_i) \mapsto
(q_f,p_f)\) is
\begin{widetext}
\begin{eqnarray*}
  \dH &=& H(q_f,p_f) - H(q_i,p_i)
    = [H(q_f,p_f) -  H(q_i,p_i)] - [H'(q_f,p_f) - H'(q_i,p_i)] \\
    &=& [H(q_f,p_f) - H'(q_f,p_f)] - [H(q_i,p_i) - H'(q_i,p_i)]
    = - \Delta H(q_f,p_f) + \Delta H(q_i,p_i) \\
    &=& -\delta\Delta H = \delta S - \delta\Sest
      - \rational1{24}\left[-p^2\delta\Sest''
        + 2\delta\left(\Sest'\right)^2 \right]\,\dt^2 + O(\dt^4) \\
    &=& (S'-\Sest')p\,\dt
      + \half\left[{\Sest'}^2 - S'\Sest' - p^2(\Sest'' - S'')\right]\dt^2 \\
    && \qquad + \rational1{24}\left[p^3(4S''' - 3\Sest''')
        + 6p(2\Sest''\Sest - S'\Sest'' - 2S''\Sest')\right]\dt^3
      + O(\dt^4)
\end{eqnarray*}
where the last line is for a single leapfrog step where we have used
equation~(\ref{eq:QPQevol}).
\end{widetext}

If we consider the quantity \(e^{-\dH}\) averaged over the noise \(\eta\) for a
single leapfrog step we obtain
\begin{eqnarray}
  \left\langle e^{-\dH}\right\rangle_\eta
    &=& 1 - \langle\dH\rangle_\eta + \half\langle\dH^2\rangle_\eta + \ldots
      \nonumber \\
    &=& 1 - \half\left\langle(\Sest' - S')^2\right\rangle_\eta (1-p^2)\dt^2
      + O(\dt^3).
      \label{eq:noiseav}
\end{eqnarray}
Observe that the coefficient of \((1-p^2)\dt^2\) is proportional to the
variance of the noisy estimator \(\Sest'\), and thus can only vanish if the
force is computed exactly.  Thus, regardless of the order of the integrator
used there will always be an \(O(\dt^2)\) leading error for a single MD step.

\subsection{Noisy Langevin Algorithm}

If we now use equation~(\ref{eq:rfp}) to compute the equilibrium distribution
for the noisy Langevin algorithm we must average equation~(\ref{eq:noiseav})
over a Gaussian distribution for \(p\), thus we find that \[\left\langle
e^{-\dH}\right\rangle_{\eta,p} = 1 + A\,\dt^4 + O(\dt^6),\] where
\begin{eqnarray*}
  A \defn \rational1{16} \biggl[-8S'\langle\Sest'\Sest''\rangle_\eta
    + 4\langle{\Sest''}^2\rangle_\eta - 2S''\langle{\Sest'}^2\rangle_\eta &&\\ 
      + 8{S'}^2S'' + 4\langle\Sest'''\Sest'\rangle_\eta - S'S''' & & \\ 
      - S^{(4)} - 2{S'}^4 - 2{S''}^2
      + 2{S'}^2\langle{\Sest'}^2\rangle_\eta && \biggr].
\end{eqnarray*}
Since, for a noisy force equation~(\ref{eq:dDS-Lang}) is now of the form
\begin{displaymath}
  \delta\DS = p\DS'\,\dt + \half(p^2\DS'' - \Sest'\DS')\dt^2 + O(\dt^3),
\end{displaymath}
we immediately find that \(\DS = O(\dt^2)\), and more specifically it satisfies
the equation \(e^S\left(e^{-S} \DS'\right)' = -2A\,\dt^2\).

\subsection{Noisy Hybrid Algorithm}
\label{sec:noisyHybrid}

In the case of the noisy Hybrid algorithm (for which there are \(\trjlen/\dt\)
steps per momentum refreshment) we recall that the evolution is to be averaged
over independent noise for each integration step.  Since the momentum
distribution is no longer Gaussian after taking a leapfrog step, the leading
order term is not cancelled as it would be if we only did one MD step per
trajectory.  The fixed point condition now becomes
\begin{eqnarray*}
\lefteqn{ \left\langle e^{-\delta(H+\Delta S)}
\right\rangle_{p,\eta_1,...,\eta_N} } \qquad \\ & = & \left\langle
\left\langle e^{-\delta H_1} \right\rangle_{\eta_1}...\left\langle
e^{-\delta H_N}\right\rangle_{\eta_N} e^{-\delta\Delta
S}\right\rangle_p = 1,
\end{eqnarray*}
where \(\delta H=\delta H_1 +\cdots+\delta H_N\).  Thus the leading error of
the noisy Hybrid algorithm is \(\Delta S\sim O(\trjlen\dt)\), as each step
contributes an error of \(O(\dt^2)\) by equation~(\ref{eq:noiseav}).  A way of
reducing this error to \(O(\trjlen\dt^2)\) is the \(R\) algorithm
\cite{gottlieb87a}, which we shall discuss in~\S\ref{sec:R}.

Note that when we construct a Hybrid algorithm with a noisy force estimator
there is no reason to expect there to be a nearby conserved ``shadow''
Hamiltonian, as averages of non-linear Poisson brackets will not be correct.
Thus if the average error per step is \(O(\dt^2)\) we expect the errors at the
end of a trajectory of \(\trjlen/\dt\) steps to be \(O(\trjlen\dt)\) and not
just \(O(\dt)\).

\subsection[Pseudofermions and the Phi Algorithm]{Pseudofermions and the
  \(\Phi\) Algorithm}

We now turn our attention to the specific case of gauge theories in the
presence of dynamical fermions.  We start with the probability distribution for
the gauge field \(U\) with the quadratic fermion contribution integrated out in
favour of the fermionic determinant, \(P(U) \propto e^{-\Sg(U)}\,\det\M(U)\),
where \(\Sg\) is the pure gauge part of the action and \(\M\) is the fermion
kernel.  For the purpose of this discussion we ignore the pure gauge
contribution to the action since this is a simple local quantity whose force
can be computed exactly, and shall focus on the awkward fermion determinant.
As usual we may replace the determinant with an integral over a complex
pseudofermion \cite{Weingarten:1981hx} field \(\Phi\) and write the joint
probability distribution of the gauge and pseudofermion fields as \(P(U,\Phi)
\propto \exp\left\{- \left[\Sg(U) + \Phi^\dagger \M^{-1}\Phi \right]\right\}
\defn e^{-\Seff(U, \Phi)}\).  We have taken the fermion kernel to be \(\M \defn
M^\dagger M\) representing two flavours of fermion with Dirac operator \(M\) to
allow for a simple implementation of the pseudofermion heatbath.

\subsubsection[Phi Algorithm]{\(\Phi\) Algorithm} The gauge field \(U\)
corresponds to the variable \(q\) used in the general discussion before, and we
will evolve it along a classical trajectory in the presence of a fixed
pseudofermion background field \(\Phi\).  We introduce a conjugate momentum
field \(p\) in order to define a Hamiltonian \(H(U,p)\), \(P(U,p) \propto \exp
\left\{ -\left[ \half p^2 + \Seff(U)\right]\right\} \defn e^{-H(U,p)}\); the
action \(\Seff(U)\) takes the role of the potential in the Hamiltonian, and of
course depends implicitly on the background pseudofermion field.  Note that the
pseudofermion force term is given by
\begin{equation}
  \dd{\Sf}U = \Phi^\dagger \dd{\M^{-1}}U \Phi 
    = - \Phi^\dagger \M^{-1} \dd{\M}U \M^{-1} \Phi.
  \label{eq:phi}
\end{equation}

The \(\Phi\) algorithm of Gottlieb \emph{et al.} \cite{gottlieb87a} is
identical to the Hybrid algorithm described in \S\ref{sec:hybrid} with the
addition of pseudofermion refreshment from a Gaussian heatbath before each MD
trajectory.  With regards to fermions, the \(\Phi\) algorithm is restricted to
describing \(\Nf\) degenerate flavours of fermions, where \(\Nf\) is the number
of fermions described by the operator~\(\M\).

The \(\Phi\) algorithm uses a QPQ integrator, as the evaluation of the
pseudofermion force acting on the gauge fields required for the P step is only
evaluated once in this case.

According to the general arguments presented in \S\ref{sec:equilibrium} the
fixed point distribution of the \(\Phi\) algorithm must satisfy
equation~(\ref{eq:rfp}).  Performing an asymptotic expansion of this in powers
of \(\dt\), and observing that \(\delta H = O(\dt^2)\) for any trajectory
length \(\trjlen\), we deduce that \(\delta\Delta S \sim O(\dt^2)\).  We thus
have that \(\Delta S \sim O(\dt^2)\), so the \(\Phi\) algorithm is accurate to
this order.

\subsubsection[chi Algorithm]{\(\chi\) Algorithm} A slight variation of the
\(\Phi\) algorithm is the \(\chi\) algorithm.  The difference is that in the
latter the pseudofermion heatbath refreshment is performed before every MD step
as opposed to before each MD trajectory in the former.  We might expect the
error per step to be \(O(\dt^2)\), leading to an error per trajectory of
\(O(\trjlen\dt)\).  However, the error per trajectory is in fact
\(O(\trjlen\dt^2)\),\footnote{Similar to the \(\Phi\) algorithm except that it
grows with \(\trjlen\) because there is no shadow Hamiltonian.} a proof of this
will follow from that of the \(R\) algorithm to be given in~\S\ref{sec:R}.

We note in passing that the pseudofermion force for the \(\chi\) algorithm is
given by
\begin{equation}
  \dd{\Sf}U = - \left\langle\chi^\dagger \M^{-1} \dd{\M}U \M^{-1}\chi
    \right\rangle_\chi. 
  \label{eq:chi}
\end{equation}

\subsection{Non-local Actions and Noisy Hybrid Algorithms}

There is considerable interest in having the number of fermion flavours unequal
to that described by the fermion kernel (e.g., less than four flavours for
staggered fermions).  For such theories the required probability distribution
is given by \(P(U) \propto e^{-\Sg(U)} \det\M(U)^n\), where the number of
multiplets \(n\) determines the number of fermion flavours (e.g.,
\(n=\half\Nf\) for Wilson fermions and \(n=\quarter\Nf\) for staggered
fermions).  For \(n\not\in\Z\), the conventional pseudofermion approach fails
because neither a non-integer power of the Dirac operator nor its derivative
can be evaluated directly, which would be required to calculate the force.

\subsubsection[R0 Algorithm]{\(R_0\) Algorithm} 
An alternative to the pseudofermion approach is to rewrite the determinant in
trace log form where the effective fermion action is \(\Sf = -n\tr\ln\M.\) The
pseudofermionic force is replaced by a noisy estimator for the trace, since
computing the trace exactly is prohibitively expensive.  This force is written
as
\begin{eqnarray*}
  \dd{\Sf}U &=& -n \tr\left[\dd{\ln\M}U\right]
    = -n \tr\left[\M^{-1}\dd{\M}U\right]\\
    &=& -n \tr\left[(M^\dagger M)^{-1} \dd{\M}U\right] \\
    &=& -n \tr\left[{M^\dagger}^{-1} \dd{\M}U M^{-1} \right]\\
    &=& -n \left\langle \eta^\dagger{M^\dagger}^{-1}\dd{\M}U M^{-1}\eta
           \right\rangle_\eta,
\end{eqnarray*}
where \(\eta\) is a complex noise vector sampled from a Gaussian heatbath of
unit variance.  Defining an auxilliary field \(\chi \defn M^\dagger\eta\) the
force becomes
\begin{equation}
  \dd{\Sf}U = -n \left\langle\chi^\dagger \M^{-1} \dd{\M}U \M^{-1} \chi 
    \right\rangle_{\chi = M^{\dagger}\eta}.
  \label{eq:r0}
\end{equation}
With this formulation we can use the noisy Hybrid algorithm of
\S\ref{sec:noise} to generate gauge field configurations corresponding to any
number of flavours: the noisy estimator for the force \(\Sigma'\) being defined
by \(\dd{\Sf}U \defn -n \left\langle \Sigma' \right\rangle_\chi\).  This is the
\(R_0\) algorithm of Gottlieb \emph{et al.}~\cite{gottlieb87a} and has leading
order error \(O(\trjlen\dt)\).

\subsection[Reversibility, Area Preservation, and the R
  Algorithm]{Reversibility, Area Preservation, and the \(R\) Algorithm}
\label{sec:R}

As can be seen from equations~(\ref{eq:phi}), (\ref{eq:chi}), and
(\ref{eq:r0}), the \(\Phi\), \(\chi\), and \(R_0\) algorithms have similar
``pseudofermion'' force terms despite their different derivations.  Indeed, we
introduced the \(\chi\) algorithm to emphasise this similarity: in both the
\(\chi\) and \(R_0\) algorithms the ``pseudofermion'' force is computed from
Gaussian noise \(\eta\) for each MD step.  The difference is that in the former
the pseudofermion field \(\chi = M^\dagger\Bigl(U(t)\Bigr) \eta\) is calculated
from a heatbath at the beginning of each MD step, whereas in the latter the
auxilliary field \(\chi = M^\dagger\Bigl(U(t+\half\dt)\Bigr) \eta\) is
calculated at the midpoint of each MD step (that is, at the same time as the
force itself is evaluated for the integrator).

The \(\chi\) algorithm has \(O(\trjlen\dt^2)\) errors for \(n=1\) multiplets,
whereas the \(R_0\) algorithm has \(O(\trjlen\dt)\) errors.  However, for
\(n=0\) multiplets (i.e., no fermions) the algorithms are identical and have
errors of \(O(\dt^2)\).  It seems reasonable to expect that the leading error
has a linear dependence both on the time within the MD step at which the
pseudofermions are generated from their heatbath and on the number of
multiplets, so if we evaluate the pseudofermion field at time \(t =
\half(1-n)\dt\) through the MD step for \(0\le n\le 1\) we should obtain an
\(O(\trjlen\dt^2)\) algorithm.  This is the \(R\) algorithm of Gottlieb
\emph{et al.}  \cite{gottlieb87a}.  For two flavours of staggered fermions,
this means evaluating the pseudofermion field a quarter way through each MD
update.  Note that this algorithm is neither reversible nor area-preserving.

To prove that the \(R\) algorithm does indeed have \(O(\trjlen\dt^2)\) leading
order error we again look at the fixed point (equilibrium) distribution.  The
condition is that given in equation~(\ref{eq:req}), and in this case we have
neither area-preservation nor reversibility.  We consider a single step of the
\(R\) algorithm, where the auxilliary field \(\chi\) is computed at a time
\(t=\half(1-\alpha)\dt\) with \(\alpha\) a free parameter.  Expanding
equation~(\ref{eq:req}) to leading non-vanishing order we obtain
\[\left\langle(\delta+\dbar)\Delta S\right\rangle_{p,\eta} \sim
-\left\langle(\delta+\dbar)H + \tr\ln\,U_*\right\rangle_{p,\eta} + \cdots.\]

We can compute the leading contributions to this quantity as follows; the
change in energy over a trajectory is \[\langle \deltaH\rangle_{p,\eta} = 2 n
(n + 2p^2\alpha ) \tr\left[ \minv \dmdu \minv \dmdu\right] \dt^2,\] where the
\(O(\dt)\) term vanishes upon noise averaging.  Taylor expanding the Jacobian
for each update step gives \[\langle \tr\ln U_*\rangle_{p,\eta} = n\alpha
\tr\left[\minv\dmdu\minv \dmdu\right] \dt^2.\] Finally, the leading order
contribution to the quantity \(\dbar U = U^{-1}(\dt) - F\comp U\comp F\) that
measures the lack of reversibility of the integrator is \[\langle \dbar
H\rangle_{p,\eta} = 2 p^2 n \alpha \tr\left[\minv \dmdu \minv \dmdu\right]
\dt^2.\]

We thus find that \[\langle(\delta+\dbar)\Delta S\rangle_{p,\eta} \sim - A\dt^2
+ O(\dt^3)\] where \[A = \half n(n-\alpha)(1-p^2) \tr \left[\minv\dmdu\minv
\dmdu \right].\] If we choose \(\alpha=n\) the leading term cancels, and thus
the leading error is \(O(\trjlen\dt^2)\) for an entire trajectory of
\(\trjlen/\dt\) steps.  Therefore, as claimed, the \(R\) algorithm has errors
of \(O(\trjlen\dt^2)\), and thus so does the \(\chi\) algorithm since it
corresponds to the special case of the \(R\) algorithm with \(n=1\).

\subsection{Exact Noisy Algorithms}
\label{sec:exactnoisy}

It is interesting to consider whether the noisy algorithms described above can
be made exact (in the sense of having no integrator step-size errors), and if
so how.

With the addition of an accept/reject step after the MD step, the \(\Phi\)
algorithm becomes the exact Hybrid Monte Carlo (HMC) algorithm \cite{duane87a}.

The noisy Hybrid algorithm can be made exact by including a noisy acceptance
step \cite{kennedy85e,kennedy85f,Lin:1999qu,Bakeyev:2000wm} after each MD
integration step.  Note that a trajectory is defined as being the MD evolution
between momentum refreshments, and can consist of any number of MDMC steps,
where an MDMC step is an MD step followed by a (noisy) acceptance test.  Using
just one acceptance test at the end of the molecular dynamics trajectory is not
valid, since reversibility is violated because of the noise.  Unfortunately
this exact version of the algorithm suffers from a significantly longer
autocorrelation time: this is because the momentum must be flipped if a
rejection occurs at any of the \(\trjlen/\dt\) noisy acceptance tests required
per trajectory, just as for the second-order Langevin (Kramer's)
algorithm~\cite{kennedy91b,kennedy91a,Kennedy:1999di}.

The \(\chi\) algorithm can be made exact by the addition of a Metropolis
acceptance step after each MD step, and including a momentum flip after every
rejected MD update: however, the resulting algorithm suffers from the same
problems as the exact noisy Hybrid algorithm discussed before.\footnote{In
fact, it is essentially the same algorithm.}

The \(R_0\) algorithm can be made exact through the addition of a noisy
acceptance test, however, since the algorithm has scaling \(O(\trjlen\dt)\),
this would require very small step sizes.  Unfortunately the \(R\) algorithm
cannot be made exact by adding a (noisy) Metropolis acceptance step, because
the lack of area-preservation and reversibility preclude detail balance being
satisfied.

Clearly, the use of exact algorithms is preferable to that of the inexact
algorithms described in this work.  For non-local actions (e.g., a non-integral
number of fermion multiplets) the Rational Hybrid Monte Carlo algorithm
\cite{Clark:2006fx,Clark:2006wp} is a good candidate: in this algorithm the
fractional power of the fermion kernel that appears in the pseudofermion
bilinear is replaced by a rational approximation that can be directly evaluated
and differentiated.  Hence a pseudofermion formulation can be used, and a
Metropolis acceptance test can be added to render the algorithm exact.

\section{Conclusions}

In this paper we have described how the use of a symplectic integrator for the
numerical integration of Hamilton's equations of motion not only preserves the
reversibility and area-preservating properties of their exact solution, but
also conserves a ``shadow'' Hamiltonian close to the original one.  We have
shown how this conserved Hamiltonian may be computed as a power series in the
integration step size \(\dt\) using the BCH formula for the Lie algebra of
Poisson brackets.

We then considered Markov processes of the Hybrid type, in which molecular
dynamics and momentum refreshment Markov steps are alternated.  Since the fixed
point of the molecular dynamics step does not coincide with that of the
momentum heatbath neither can be the fixed point of the full Markov process.
We have derived a general condition (equation~(\ref{eq:req})) for this fixed
point distribution, which simplifies to equation~(\ref{eq:rfp}) for the case of
reversible and area preserving MD integrators, and shown how properties of this
distribution can be found by expanding these conditions in powers of the
integration step size \(\dt\).  For the case of the Langevin algorithm, it was
shown how the equilibrium distribution can be found explicitly to any order in
\(\dt\), and why this is only an asymptotic expansion.

Finally we considered those algorithms which use a noisy estimate of the force.
Here the leading order behaviour of these algorithms was found for the noisy
Langevin and Hybrid algorithms.  It was shown that in general noisy Hybrid
algorithms have leading order error \(O(\trjlen\dt)\) regardless of the order
of the numerical integrator, however, there are special cases where we can
cancel this leading term through a judicious choice of when we evaluate the
noise, i.e., the \(R\) algorithm.  We also considered how to render these
algorithms exact through the addition of a Metropolis acceptance test.

\section*{Acknowledgements}

ADK would like to thank Stefan Sint, Ivan Horv\'ath, Barry Trager, and Jos\'e
Figueroa-O'Farrill for helpful discussions.  MAC is supported under NSF
grant PHY-0427646.

\pagebreak
\appendix\section{Hamiltonian Vector Fields and Poisson Brackets}
\label{sec:poisson}

We shall denote by \(\Lambda^k\) the set of antisymmetric multilinear
\(k\)-forms that act on \(k\)-tuples of vectors in the tangent bundle \(T\M\)
over a manifold \(\M\).
\begin{definition}
  The \emph{exterior derivative} \(d:\Lambda^k\to\Lambda^{k+1}\) is the unique
  linear transformation satisfying
  \begin{enumerate}
    \item \(df(v) = v(f)\) for any \(0\)-form \(f\in\Lambda^0\) and vector
      field \(v\in T\M\),
    \item \(d^2 = 0\), and
    \item \(d(\alpha \wedge \beta) = (d\alpha)\wedge\beta +
    (-1)^{\deg\alpha}\alpha\wedge d\beta\) (the anti-Leibniz rule).
  \end{enumerate}
\end{definition}

\begin{lemma}
  For any \(\Omega\in\Lambda^k\)
  \begin{eqnarray*}
    d\Omega(v_0,\ldots,v_k) 
    = \sum_i (-)^i v_i\Omega(v_0, \ldots, {\hat v}_i, \ldots, v_k) & & \\
    + \sum_{i<j} (-)^{i+j} \Omega([v_i,v_j], v_0, \ldots, {\hat v}_i,
        \ldots, {\hat v}_j, \ldots, v_k),& & 
\end{eqnarray*}
where \({\hat v}_j\) indicates that the term \(v_j\) is omitted, and
\([a,b]\in T\M\) is the commutator of two vector fields \(a, b\in T\M\).
\end{lemma}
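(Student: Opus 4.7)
The plan is to prove the formula by two reductions: first verify the small-$k$ cases directly, then combine a tensoriality argument with a local-coordinate computation. Write $R(\Omega; v_0,\ldots,v_k)$ for the right-hand side of the claimed identity. I want to show $d\Omega(v_0,\ldots,v_k) = R(\Omega;v_0,\ldots,v_k)$.

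The base cases are immediate. For $k=0$ the formula reads $df(v_0)=v_0(f)$, which is defining property~(1). For $k=1$ I would verify $d\omega(v_0,v_1) = v_0\omega(v_1) - v_1\omega(v_0) - \omega([v_0,v_1])$ by writing $\omega$ locally as a sum of terms $f\,dg$, using defining property~(3) to get $d\omega = df\wedge dg$, and rearranging with the Leibniz rule for derivations. This $k=1$ case is the cornerstone, since every higher commutator term arises from repeated application of the same identity.

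Next I would show that both sides of the claimed identity are $C^\infty(\M)$-multilinear in $(v_0,\ldots,v_k)$. The left side is tensorial because $d\Omega \in \Lambda^{k+1}$. For the right side, replacing $v_i$ by $f v_i$ one checks that the ``unwanted'' derivative terms of the form $v_j(f)\,\Omega(\ldots)$ generated by the first sum are exactly cancelled by the derivative terms in the commutators $[fv_i,v_j] = f[v_i,v_j] - v_j(f)\,v_i$ appearing in the second sum. This is a standard but slightly fiddly bookkeeping step. Once tensoriality is established, it suffices to verify the identity on coordinate vector fields $v_i = \partial_{x^{\sigma(i)}}$, which commute pairwise, so the commutator sum vanishes.

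Thus I reduce to the identity
\[ d\Omega(\partial_{x^{\sigma(0)}},\ldots,\partial_{x^{\sigma(k)}}) = \sum_i (-)^i \partial_{x^{\sigma(i)}}\Omega(\partial_{x^{\sigma(0)}},\ldots,\widehat{\partial_{x^{\sigma(i)}}},\ldots,\partial_{x^{\sigma(k)}}), \]
which I would verify on a monomial $\Omega = f\,dx^{i_1}\wedge\cdots\wedge dx^{i_k}$ by applying the anti-Leibniz rule (3) together with $d^2=0$: we get $d\Omega = df\wedge dx^{i_1}\wedge\cdots\wedge dx^{i_k}$, and evaluating this on the coordinate frame using the definition of the wedge product reproduces exactly the first sum on the right-hand side.

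The main obstacle is the tensoriality check in step two: one must show that the derivative-of-the-scalar terms generated by the first sum cancel precisely against those from the commutator sum, with the signs $(-)^i$ and $(-)^{i+j}$ conspiring correctly. Once that combinatorial cancellation is verified, the remainder of the argument is routine because coordinate frames trivialize the commutator terms and reduce the statement to a straightforward monomial computation.
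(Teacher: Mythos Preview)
Your proposal is correct but follows a different strategy than the paper's proof. The paper's argument is a single direct computation in local coordinates: it writes \(\Omega = \frac{1}{k!}\Omega_{\mu_1\ldots\mu_k}\,dq^{\mu_1}\wedge\cdots\wedge dq^{\mu_k}\), evaluates \(d\Omega(v_0,\ldots,v_k)\) on \emph{arbitrary} (not coordinate) vector fields \(v_i = v_i^\mu \partial_\mu\), and then applies the product rule in reverse to reorganise the resulting expression into the two claimed sums --- the derivative-of-\(v_j^\mu\) terms assemble directly into the commutator sum. Your approach instead separates the work into two independent pieces: a tensoriality check showing that the right-hand side is \(C^\infty(\M)\)-multilinear in the \(v_i\), followed by evaluation on coordinate frames where all commutators vanish and only the first sum survives. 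The paper's route is shorter and sidesteps the sign bookkeeping of the tensoriality cancellation; your route is more conceptual and makes transparent \emph{why} the commutator terms must appear (they are precisely what is needed to render the right-hand side tensorial). One small expository remark: your \(k=0,1\) ``base cases'' are not actually used in the subsequent tensoriality-plus-coordinate argument, so calling the \(k=1\) case the ``cornerstone'' is slightly misleading --- it is a useful warm-up but not a logical ingredient of your proof.
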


\begin{widetext}
\begin{proof}
  We may express \(\Omega\) in a local coordinate patch as \(\Omega =
  \frac1{k!} \Omega_{\mu_1,\ldots,\mu_k} dq^{\mu_1}\wedge\ldots \wedge
  dq^{\mu_k}\), and thus
  \begin{eqnarray*}
    \lefteqn{d\Omega(v_0,\ldots,v_k)
      = \frac1{k!} \dd{\Omega_{\mu_1,\ldots,\mu_k}}{q^{\mu_0}}
	dq^{\mu_0} \wedge dq^{\mu_1} \wedge \ldots \wedge dq^{\mu_k}
	(v_0,\ldots,v_k)} \qquad && \\
      &=& \sum_{i=0}^k (-)^i
	\dd{\Omega_{\mu_0,\ldots,{\hat \mu}_i,\ldots,\mu_k}}{q^{\mu_i}}
	  v_0^{\mu_0} v_1^{\mu_1}\cdots {\hat v}_i^{\mu_i}\cdots v_k^{\mu_k} \\
      &=& \sum_{i=0}^k (-)^i v_i^{\mu_i} \dd{}{q^{\mu_i}}
	\left[\Omega_{\mu_0,\ldots,{\hat\mu}_i,\ldots,\mu_k}
	  v_0^{\mu_0} \cdots {\hat v}_i^{\mu_i} \cdots v_k^{\mu_k}\right] \\
      && \quad + \sum_{0\leq i<j\leq k} (-)^{i+j}
	\Omega_{\mu_0,\ldots,{\hat\mu}_i,\ldots,\mu_k}
	  \left[v_i^{\mu_i} \dd{v_j^{\mu_j}}{q^{\mu_i}}
	     -  v_j^{\mu_j} \dd{v_i^{\mu_i}}{q^{\mu_j}}\right]
	  v_0^{\mu_0} \cdots {\hat v}_i^{\mu_i} \cdots {\hat v}_j^{\mu_j}
	  \cdots v_k^{\mu_k} \\
      &=& \sum_i (-)^i v_i\Omega(v_0, \ldots, {\hat v}_i, \ldots, v_k) +
	\sum_{i<j} (-)^{i+j} \Omega([v_i,v_j], v_0, \ldots, {\hat v}_i,
	  \ldots, {\hat v}_j, \ldots, v_k).
    \end{eqnarray*}
  The result is independent of the coordinate system used for this
  verification.
\end{proof}
\end{widetext}

For any \(1\)-form \(\theta\) and \(2\)-form \(\omega\) this identity is
\begin{eqnarray}
  d\theta(a,b) &=& a\theta(b) - b\theta(a) - \theta([a,b]); \nonumber \\
  d\omega(a,b,c) &=& a\omega(b,c) + b\omega(c,a) + c\omega(a,b)
  \label{eq:d2form}\\
    & & - \omega([a,b],c) - \omega([b,c],a) - \omega([c,a],b). \nonumber
\end{eqnarray}

\begin{definition}
  The cotangent bundle \(T^*\M\) has a \emph{symplectic structure} if there is
  a non-singular closed \emph{fundamental \(2\)-form} \(\omega\).
\end{definition}

\begin{definition}
  For each \(0\)-form \(F\) on \(T^*\M\) there is a corresponding
  \emph{Hamiltonian vector field} \(\hat F\) defined by \(dF \defn i_{\hat F}
  \omega\) where \(i\) is the \emph{interior produce}; equivalently we may
  write this as \(dF(x) = \omega(\hat F,x)\;\forall x\in T\M\).
\end{definition}

\begin{definition}
  The \emph{Poisson bracket} of two \(0\)-forms is \[\{A,B\} \defn -\omega(\hat
  A, \hat B) \qquad A,B\in\Lambda^0.\] 
\end{definition}

\begin{lemma} \label{lemma:hvf}
  The action of a Hamltonian vector field \(\hat A\) on a \(0\)-form \(F\) is
  given by \(\hat AF=\{A,F\}\).
\end{lemma}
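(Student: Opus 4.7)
The plan is to chain together the three definitions provided immediately before the lemma: the action of the exterior derivative on a \(0\)-form, the defining equation of the Hamiltonian vector field, and the definition of the Poisson bracket. No nontrivial computation is needed; this is purely a bookkeeping argument about unwinding definitions.

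First I would apply property (1) of the exterior derivative to \(F\), which says that \(dF(v) = v(F)\) for any vector field \(v\). Specializing this to \(v = \hat A\) gives \(\hat A F = dF(\hat A)\). Next I would invoke the defining property of the Hamiltonian vector field \(\hat F\), namely \(dF(x) = \omega(\hat F, x)\) for all \(x \in T\M\), and set \(x = \hat A\) to obtain \(dF(\hat A) = \omega(\hat F, \hat A)\). Finally, using antisymmetry of the \(2\)-form \(\omega\) together with the definition \(\{A,B\} \defn -\omega(\hat A, \hat B)\), I get \(\omega(\hat F, \hat A) = -\omega(\hat A, \hat F) = \{A, F\}\). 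Stringing these three equalities together yields \(\hat A F = \{A, F\}\), as desired.

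There is essentially no obstacle here; the only thing to be slightly careful about is the sign convention, since the Poisson bracket is defined with a minus sign relative to \(\omega(\hat A, \hat B)\), so one must match this against the antisymmetry of \(\omega\) correctly in the final step. Everything else is a direct substitution, and nondegeneracy of \(\omega\) is not even needed for this particular statement (it was used implicitly to guarantee existence and uniqueness of \(\hat F\) from \(dF\) in the previous definition). The entire proof should fit comfortably in three or four displayed lines.
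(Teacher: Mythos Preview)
Your proof is correct and is essentially identical to the paper's: the paper writes the one-line chain \(\hat AF = dF(\hat A) = i_{\hat F}\omega(\hat A) = \omega(\hat F,\hat A) = \{A,F\}\), which is exactly your three steps (with the interior-product notation used in place of your explicit appeal to the definition of \(\hat F\), and the antisymmetry left implicit in the last equality).
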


\begin{proof}
  \(\hat AF = dF(\hat A) = i_{\hat F}\omega(\hat A) = \omega(\hat F,\hat A) =
  \{A,F\}\).
\end{proof}

\begin{lemma}
  The space of \(0\)-forms on \(T^*\M\) together with the Poisson bracket form
  a Lie algebra; that is the Poisson bracket satisfies
  \begin{itemize}
    \item \(\{A,A\} = 0\) and
    \item \(\{A,\{B,C\}\} + \{B,\{C,A\}\} +\{C,\{A,B\}\} = 0\) \\(Jacobi
    identity).
  \end{itemize}
\end{lemma}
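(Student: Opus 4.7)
The plan is to deduce both properties directly from the definition $\{A,B\}\defn-\omega(\hat A,\hat B)$, combined with the closedness of the fundamental $2$-form and Lemma~\ref{lemma:hvf}. Antisymmetry $\{A,A\}=0$ is immediate, since $\omega$ is itself antisymmetric as a $2$-form, so $\omega(\hat A,\hat A)=0$.

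For the Jacobi identity, I would evaluate the general expression~(\ref{eq:d2form}) for $d\omega$ on the triple of Hamiltonian vector fields $\hat A,\hat B,\hat C$, using the fact that $\omega$ is closed:
\begin{eqnarray*}
  0 &=& \hat A\,\omega(\hat B,\hat C) + \hat B\,\omega(\hat C,\hat A)
      + \hat C\,\omega(\hat A,\hat B) \\
   && {} - \omega([\hat A,\hat B],\hat C) - \omega([\hat B,\hat C],\hat A)
      - \omega([\hat C,\hat A],\hat B).
\end{eqnarray*}
The three terms on the first line are handled by noting that $\omega(\hat B,\hat C)=-\{B,C\}$ and then applying Lemma~\ref{lemma:hvf} in the form $\hat A F=\{A,F\}$ to the scalar $\{B,C\}$; the cyclic sum collapses to $-\Sigma$, where $\Sigma\defn\{A,\{B,C\}\}+\{B,\{C,A\}\}+\{C,\{A,B\}\}$ is the Jacobi cyclic sum.

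The three commutator terms require a small manoeuvre. Using antisymmetry of $\omega$ and the defining relation $dC(x)=\omega(\hat C,x)$ of the Hamiltonian vector field, one rewrites
\[
  \omega([\hat A,\hat B],\hat C) = -dC([\hat A,\hat B])
  = -[\hat A,\hat B]\,C = -\hat A(\hat B C)+\hat B(\hat A C),
\]
and then Lemma~\ref{lemma:hvf}, applied twice, turns the right-hand side into $-\{A,\{B,C\}\}+\{B,\{A,C\}\}$. Summing over the three cyclic permutations of $(A,B,C)$ and using the already-established antisymmetry of the inner bracket to flip $\{A,C\}$, $\{B,A\}$, $\{C,B\}$, every one of the six resulting double brackets pairs up and the total collapses to $-2\Sigma$; the corresponding contribution to $d\omega(\hat A,\hat B,\hat C)$ is therefore $+2\Sigma$.

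Substituting both pieces back into $d\omega(\hat A,\hat B,\hat C)=0$ gives $-\Sigma+2\Sigma=0$, hence $\Sigma=0$, which is the Jacobi identity. The only real obstacle is the combinatorial bookkeeping in the cyclic sum of the commutator terms; no deep input is needed beyond the closedness of $\omega$ and the translation $\hat A\mapsto\{A,\cdot\}$ supplied by Lemma~\ref{lemma:hvf}, which exhibits the Jacobi identity as the direct shadow of $d\omega=0$.
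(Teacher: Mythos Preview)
Your proof is correct and follows essentially the same route as the paper: both deduce antisymmetry from the alternating property of $\omega$, and both obtain the Jacobi identity by evaluating $d\omega(\hat A,\hat B,\hat C)=0$ via equation~(\ref{eq:d2form}), translating the first three terms using Lemma~\ref{lemma:hvf} and rewriting each commutator term $\omega([\hat A,\hat B],\hat C)=-dC([\hat A,\hat B])=-\{A,\{B,C\}\}+\{B,\{A,C\}\}$ before summing cyclically. Your bookkeeping of the cyclic sum is slightly more explicit than the paper's, but the argument is the same.
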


\begin{proof}
  The antisymmetry is obvious.  To establish the Jacobi identity consider
  \(d\omega(\hat A,\hat B,\hat C)\) for three arbitrary Hamiltonian vector
  fields \(\hat A\), \(\hat B\) and \(\hat C\).  Recalling that the fundamental
  \(2\)-form is closed, \(d\omega = 0\), and using equation~(\ref{eq:d2form})
  we have
  \begin{eqnarray}
    d\omega(\hat A,\hat B,\hat C)
      &=& \hat A\omega(\hat B,\hat C) + \hat B\omega(\hat C,\hat A)
        + \hat C\omega(\hat A,\hat B) \nonumber \\
      && + \omega([\hat A,\hat B],\hat C)
	+ \omega([\hat B,\hat C],\hat A) \nonumber\\
      && + \omega([\hat C,\hat A],\hat B)\nonumber\\
	& = & 0. \label{eq:domega}
  \end{eqnarray}
  Now, \(\hat A\omega(\hat B,\hat C) = -\hat A\{B,C\}\) by the definition of
  the Poisson bracket, so \(\hat A\omega(\hat B,\hat C) = -\{A,\{B,C\}\}\) by
  application of lemma~\ref{lemma:hvf}. 

  Similarly, 
  \begin{eqnarray*}
  \lefteqn{\omega([\hat A,\hat B],\hat C) = -\omega(\hat C,[\hat A,\hat B])
    = -i_{\hat C}\omega([\hat A,\hat B])} && \\ 
   & = & -dC([\hat A,\hat B]) = -[\hat A,\hat  B]C \\
    & = & - (\hat A\hat B - \hat B\hat A)C = - \hat A\{B,C\} + \hat B\{A,C\} \\
    & = & - \{A,\{B,C\}\} + \{B,\{A,C\}\}.
  \end{eqnarray*}
  The Jacobi identity follows upon
  subsitituting these relations into equation~\ref{eq:domega}.
\end{proof}

\begin{theorem}
  Let \(\hat A\) and \(\hat B\) be two Hamiltonian vector fields, then their
  commutator is also a Hamiltonian vector field and furthermore \([\hat A,\hat
  B] = \widehat{\{A,B\}}\).
\end{theorem}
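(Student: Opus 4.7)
The plan is to prove this by showing that the vector field $[\hat A,\hat B]$ acts on every $0$-form $F$ in exactly the same way as the Hamiltonian vector field $\widehat{\{A,B\}}$ would. Since a smooth vector field on $T^*\M$ is determined by its action on $0$-forms (it is the derivation it induces on $\Lambda^0$), this identification of actions will simultaneously establish both claims: that $[\hat A,\hat B]$ is Hamiltonian, and that its generating $0$-form is $\{A,B\}$.

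First I would unfold the commutator using Lemma~\ref{lemma:hvf}. For any $F\in\Lambda^0$,
\begin{eqnarray*}
  [\hat A,\hat B]F
    &=& \hat A(\hat B F) - \hat B(\hat A F) \\
    &=& \hat A\{B,F\} - \hat B\{A,F\} \\
    &=& \{A,\{B,F\}\} - \{B,\{A,F\}\}.
\end{eqnarray*}
Next I would invoke the Jacobi identity for the Poisson bracket, namely $\{A,\{B,F\}\} + \{B,\{F,A\}\} + \{F,\{A,B\}\} = 0$, and use antisymmetry to rewrite the middle term as $-\{B,\{A,F\}\}$ and the last term as $-\{\{A,B\},F\}$. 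This immediately gives $\{A,\{B,F\}\} - \{B,\{A,F\}\} = \{\{A,B\},F\}$, so $[\hat A,\hat B]F = \{\{A,B\},F\}$.

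Finally, applying Lemma~\ref{lemma:hvf} in the reverse direction identifies the right-hand side as $\widehat{\{A,B\}}F$. Since $F$ was arbitrary, $[\hat A,\hat B]$ and $\widehat{\{A,B\}}$ coincide as derivations on $\Lambda^0$, hence as vector fields on $T^*\M$. This yields both conclusions at once.

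There is essentially no obstacle here once the Jacobi identity for Poisson brackets is in hand, which is why I would sequence the earlier lemma establishing Jacobi (via $d\omega=0$) before this theorem. The only subtle point worth a brief remark is the justification that equality of vector fields may be tested against $0$-forms alone; this is immediate because in any local coordinate patch a vector field is specified by its action on the coordinate functions $q^\mu$. An alternative route would be to verify $i_{[\hat A,\hat B]}\omega = d\{A,B\}$ directly from the identity $i_{[X,Y]} = [\L_X,i_Y]$ and Cartan's magic formula, but this is longer and merely repackages the same Jacobi content, so I would not pursue it.
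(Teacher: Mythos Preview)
Your proposal is correct and follows essentially the same approach as the paper: act on an arbitrary $0$-form with the commutator, apply Lemma~\ref{lemma:hvf} twice, and invoke the Jacobi identity to identify the result with $\widehat{\{A,B\}}F$. Your added remark justifying that equality on $0$-forms suffices, and the mention of the alternative Cartan-formula route, are reasonable embellishments but not present in the paper's more terse version.
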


\begin{proof}
  Consider the action of the commutator on an arbitrary \(F\in\Lambda^0\),
  \([\hat A,\hat B]F = (\hat A\hat B - \hat B\hat A)F = \{A,\{B,F\}\} -
  \{B,\{A,F\}\}\) upon applying lemma~\ref{lemma:hvf}.  Using the Jacobi
  identity we obtain \([\hat A,\hat B]F = - \{F,\{A,B\}\} = \{\{A,B\},F\} =
  \widehat{\{A,B\}}F\).  As this must hold \(\forall F\) the result follows.
\end{proof}

This argument may be carried out more explicitly in a local coordinate patch.
Locally \(\omega\) may always be written as \(\sum_i dq_i\wedge dp_i\) (Darboux
theorem) so 

\begin{eqnarray*}\omega(\hat A,x) & = & dA(x) \\
\implies \left(\sum_i dq_i\wedge
dp_i\right)(\hat A,x) & = & \left(\dd{A}{q^i} dq_i + \dd{A}{p^i}
dp_i\right)(x)
\end{eqnarray*}
\begin{eqnarray*}
\lefteqn{ \implies \left(\sum_i dq_i\wedge dp_i\right) \left(a^j\dd{}{q^j}
  + \tilde a^j\dd{}{p^j}, x^k\dd{}{q^k} + \tilde x^k\dd{}{p^k}\right) } \qquad
  \qquad \qquad \qquad & & \\
   &=& a^i\tilde x^i-\tilde a^i x^i = \dd{A}{q^i} x^i + \dd{A}{p^i}\tilde x^i;
\end{eqnarray*}
and as this must hold for arbitrary \(x\) we may identify \[a^i =
\dd{A}{p^i} \quad \hbox{and}\quad \tilde a^i=-\dd{A}{q^i},\] and thus
\[\hat A = \sum_i \left(\dd{A}{p^i}\dd{}{q^i} - \dd{A}{q^i}\dd{}{p^i}
\right).\] 
\begin{widetext}
The commutator of \(\hat A\) and \(\hat B\) is
\begin{eqnarray*}
  [\hat A,\hat B] &=& \left[
  \sum_i\left(\dd{A}{p^i} \dd{}{q^i} - \dd{A}{q^i} \dd{}{p^i}\right),
  \sum_j\left(\dd{B}{p^j} \dd{}{q^j} - \dd{B}{q^j} \dd{}{p^j}\right)
  \right] \\
  &=& \sum_{i,j} \left[ \left(
    \dd{A}{p^i} \ddd{B}{q^i}{p^j}
    - \dd{B}{p^i} \ddd{A}{q^i}{p^j}
    - \dd{A}{q^i} \ddd{B}{p^i}{p^j}
    + \dd{B}{q^i} \ddd{A}{p^i}{p^j}
  \right) \right. \dd{}{q^j} \\
  && \quad + \left. \left(
    - \dd{A}{p^i} \ddd{B}{q^i}{q^j}
    + \dd{B}{p^i} \ddd{A}{q^i}{q^j}
    + \dd{A}{q^i} \ddd{B}{p^i}{q^j}
    - \dd{B}{q^i} \ddd{A}{p^i}{q^j}
  \right) \dd{}{p^j} \right] \\
  &=& \sum_j \left[
    \dd{}{p^j} \sum_i 
      \left( \dd{A}{p^i}\dd{B}{q^i} - \dd{A}{q^i}\dd{B}{p^i}\right)
      \dd{}{q^j} - \dd{}{q^j}
      \left( \dd{A}{p^i}\dd{B}{q^i} - \dd{A}{q^i}\dd{B}{p^i}\right)
      \dd{}{p^j}
    \right] \\
  &=& \sum_j\left(\dd{\{A,B\}}{p^j} \dd{}{q^j} - \dd{\{A,B\}}{q^j}
  \dd{}{p^j}\right) = \widehat{\{A,B\}},
\end{eqnarray*}
where the Poisson bracket is
\begin{eqnarray*}
  \{A,B\} &\defn& -\sum_i 
    \left(\dd{A}{q^i}\dd{B}{p^i} - \dd{A}{p^i}\dd{B}{q^i}\right) \\
  &=& -\left(\sum_i dq_i\wedge dp_i\right) \left(
    \dd{A}{p^j}\dd{}{q^j} - \dd{A}{q^j}\dd{}{p^j},
    \dd{B}{p^k}\dd{}{q^k} - \dd{B}{q^k}\dd{}{p^k}
  \right) = -\omega(\hat A,\hat B).
\end{eqnarray*}
\end{widetext}
\section{Basic Properties of Lie Algebras}

\begin{definition}
  Let \(K\) be a commutative ring with unit.  A {\em Lie Algebra} over \(K\) is
  a \(K\)-module\footnote{Recall that if \(K\) is a field then a \(K\)-module
  is a linear space.} \(\L\) together with a \(K\)-bilinear mapping
  \(\L\times\L\to\L: (x,y)\mapsto[x,y]\) called a {\em Lie bracket} which
  satisifies
  \begin{equation}
    \begin{array}{rll}
      [x,x] &= 0 & \\ \relax
      [x,[y,z]] + [y,[z,x]] + [z,[x,y]] &= 0 &\hbox{(Jacobi identity)}
    \end{array}
    \label{eq:liedef}
  \end{equation}
  for all \(x,y,z\in\L\).
\end{definition}
Note that this implies that the Lie bracket is antisymmetric,
\begin{displaymath}
  [x,y] + [y,x] = 0 \qquad\forall x,y\in\L
\end{displaymath}
because \([x+y,x+y] = [x,x] + [x,y] + [y,x] + [y,y] = [x,y] + [y,x]\).  The
converse is also true unless \(K\) has characteristic~2.

If \(\A\) is an associative algebra over \(K\), then it has a natural Lie
algebra structure\footnote{Note that the Jacobi identity holds automatically.}
given by \([x,y] \defn xy - yx\).
\begin{lemma} \label{lemma:envalg}
  For any given Lie algebra \(\L\) there is a unique associative algebra
  \(\A_0\) called the {\em enveloping algebra of \(\L\)} with a Lie algebra
  homomorphism \(\phi_0:\L\to\A_0\) defined by \(\phi_0:[x,y]\mapsto xy - yx\)
  which has the universal property that for all \(\A\) for which there is a Lie
  algebra homomorphism \(\phi:\L\to\A\) there is a unique algebra homomorphism
  \(f:\A_0\to\A\) such that \(\phi = f\comp\phi_0\).
\end{lemma}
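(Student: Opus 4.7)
The plan is the standard universal construction. First I would build $\A_0$ explicitly as a quotient of the tensor algebra of $\L$ viewed as a $K$-module, and then verify the universal property, from which uniqueness up to unique isomorphism follows by the usual abstract-nonsense argument.

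Concretely, I would form the tensor algebra $T(\L) \defn \bigoplus_{n\geq 0} \L^{\otimes n}$, where $\L^{\otimes 0}=K$, equipped with concatenation as multiplication; this is the free associative $K$-algebra on $\L$. Let $\I\subset T(\L)$ denote the two-sided ideal generated by all elements of the form $x\otimes y - y\otimes x - [x,y]$ with $x,y\in\L$, and set $\A_0 \defn T(\L)/\I$. The composition $\phi_0:\L\hookrightarrow T(\L)\twoheadrightarrow\A_0$ is $K$-linear, and by construction $\phi_0(x)\phi_0(y) - \phi_0(y)\phi_0(x) = \phi_0([x,y])$, so $\phi_0$ is a Lie algebra homomorphism from $\L$ into the associative algebra $\A_0$ viewed with its natural Lie bracket.

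To establish the universal property, let $\phi:\L\to\A$ be any Lie algebra homomorphism into an associative algebra $\A$. Since $T(\L)$ is the free associative algebra on the $K$-module $\L$, the $K$-linear map $\phi$ extends uniquely to a $K$-algebra homomorphism $\tilde\phi:T(\L)\to\A$. The fact that $\phi$ respects brackets means $\tilde\phi(x\otimes y - y\otimes x - [x,y]) = \phi(x)\phi(y) - \phi(y)\phi(x) - \phi([x,y]) = 0$, so $\tilde\phi$ vanishes on the generators of $\I$ and hence on $\I$ itself. Therefore $\tilde\phi$ descends to a unique algebra homomorphism $f:\A_0\to\A$ satisfying $f\comp\phi_0 = \phi$. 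Uniqueness of $f$ at the level of $\A_0$ follows because $\phi_0(\L)$ generates $\A_0$ as an algebra, so any two extensions agreeing on $\phi_0(\L)$ must coincide.

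Finally, uniqueness of the pair $(\A_0,\phi_0)$ up to unique isomorphism is the standard consequence of a universal property: given any other pair $(\A_0',\phi_0')$ with the same universal property, applying the property of $\A_0$ to $\phi_0'$ and vice versa produces mutually inverse algebra homomorphisms between $\A_0$ and $\A_0'$ intertwining $\phi_0$ and $\phi_0'$. The only mildly subtle point is that one must check $\tilde\phi$ actually kills the whole ideal $\I$ rather than just its generators, but this is automatic since $\tilde\phi$ is an algebra homomorphism and $\I$ is the two-sided ideal they generate; this is the step I would be most careful to spell out, but no genuine obstacle arises.
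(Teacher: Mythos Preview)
Your proof is correct and follows exactly the same construction as the paper: define \(\A_0 = \T/\I\) where \(\T\) is the tensor algebra over \(\L\) and \(\I\) is the two-sided ideal generated by \(x\otimes y - y\otimes x - [x,y]\). The paper in fact only states this construction without verifying the universal property or uniqueness, so your write-up is more complete than the original.
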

In other words the enveloping algebra ``contains'' all associative algebras
which have \(\L\) as their natural Lie algebra.
\begin{proof}
  Let \(\T\) be the tensor algebra\footnote{\(\T\) is the universal algebra
  over the module \(\L\).  We shall denote multiplication in \(\T\)
  by~\(\otimes\).} over \(\L\) and \(\I\) be the ideal of \(\T\) generated by
  the elements \(x\otimes y - y\otimes x - [x,y]\;(\forall x,y\in\L)\); then
  \(\A_0 = \T/\I\).
\end{proof}

\subsection{Poincar\'e--Birkhoff--Witt Theorem}

Not only is a Lie algebra contained in its enveloping algebra, but in fact the
tensor algebra has a direct sum decomposition \(\T = \T_s\oplus\I\), where
\(\T_s\) is the subalgebra a \(\T\) consisting of all symmetric tensors, so
\(\A_0 \cong\T_S\).  This is the content of the following
\begin{theorem}[Poincar\'e--Birkhoff--Witt]
  Let \(\L\) be a Lie algebra over \(K\) which is a free \(K\)-module with a
  totally ordered basis \((x_i)\), and let \(\A_0\) be its enveloping
  algebra.  Then \(\A_0\) is a free \(K\)-module with the set of ordered
  products \(\phi_0(x_{i_1}) \ldots \phi_0(x_{i_n}) \;(i_0\leq\ldots\leq i_n)\)
  as a basis.
\end{theorem}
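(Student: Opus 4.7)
The plan is the standard two-part argument: spanning is straightforward from the defining relations, and linear independence requires constructing an auxiliary representation. Write $e_I \defn \phi_0(x_{i_1})\cdots\phi_0(x_{i_n})$ for an ordered multi-index $I=(i_1\leq\cdots\leq i_n)$, including the empty index giving the unit.

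First I would show that the ordered monomials $e_I$ span $\A_0$. By construction $\A_0 = \T/\I$, so every element is a $K$-linear combination of arbitrary products $\phi_0(x_{j_1})\cdots\phi_0(x_{j_n})$. Using the relation $\phi_0(x)\phi_0(y) - \phi_0(y)\phi_0(x) = \phi_0([x,y])$ repeatedly, any adjacent pair with $j_k>j_{k+1}$ can be swapped at the cost of a term of strictly lower tensor degree. Induction on degree, and within a given degree on the number of inversions relative to the total order, then shows that any product rewrites as a $K$-linear combination of ordered $e_I$.

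The main obstacle is linear independence. For this I would construct a $K$-module $V$ freely generated by the symbols $z_I$ indexed by ordered multi-indices $I$, and define a $K$-linear action of each basis vector $x_j$ on $V$ which, once shown to satisfy the Lie bracket relations, extends by lemma~\ref{lemma:envalg} to an algebra homomorphism $\rho:\A_0\to\mathrm{End}(V)$. The action is defined recursively by length and index: if $I=(i_1,\ldots,i_n)$ with $j\leq i_1$ (or $n=0$), set $x_j\cdot z_I \defn z_{(j,I)}$; otherwise set
\begin{equation*}
  x_j\cdot z_I \defn z_{i_1}\cdot(x_j\cdot z_{I'}) + z_{([x_j,x_{i_1}],I')},
\end{equation*}
where $I'=(i_2,\ldots,i_n)$ and the last symbol is extended $K$-linearly in its first slot. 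One must verify that this is well-defined and, more importantly, that $x_j\cdot(x_k\cdot z_I) - x_k\cdot(x_j\cdot z_I) = [x_j,x_k]\cdot z_I$ for all $j,k,I$. The hard case is when both $j$ and $k$ exceed $i_1$, where one has to unwind both recursions and use the Jacobi identity in $\L$ together with the inductive hypothesis on shorter indices; this is the combinatorial heart of the proof.

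Once $\rho$ exists, linear independence is essentially free: applying any finite $K$-linear combination $\sum_I c_I e_I$ (with all $I$ ordered) to the empty-index generator $z_\emptyset\in V$ produces $\sum_I c_I z_I$, because the recursive definition reduces to simple prepending in the totally ordered case. Since the $z_I$ are a basis of $V$ by construction, all $c_I$ vanish. Combining spanning with independence yields the theorem.
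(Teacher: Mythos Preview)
Your outline is correct and follows the standard representation-theoretic proof: build an \(\L\)-module structure on the free \(K\)-module \(V\) on ordered multi-indices, verify the bracket relation (Jacobi enters exactly in the case you flag), extend to \(\A_0\) via the universal property, and read off independence by acting on \(z_\emptyset\). (Two small slips: in the recursive clause \(z_{i_1}\cdot(\ldots)\) should read \(x_{i_1}\cdot(\ldots)\), and the symbol \(z_{([x_j,x_{i_1}],I')}\) cannot literally mean prepending, since the resulting tuple need not be ordered; what you want there is \([x_j,x_{i_1}]\cdot z_{I'}\) with the bracket expanded in the basis, which is well-defined by induction on the shorter \(I'\).) The paper takes a genuinely different route: it stays inside the tensor algebra and proves directly that \(\T=\T_s\oplus\I\), where \(\T_s\) is the submodule of ordered monomials, by a straightening procedure---replace an out-of-order adjacent pair \(y\otimes x\) by \(x\otimes y-[x,y]\) and push the correction term into the \(\I\)-summand---and then checks that this decomposition is independent of the order in which the rule is applied. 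The only nontrivial overlap is three adjacent out-of-order factors \(z\otimes y\otimes x\), where two reduction orders are computed by hand and their difference collapses precisely by the Jacobi identity. Both arguments pivot on Jacobi at the same combinatorial juncture; the paper's diamond-lemma style computation is more hands-on and delivers the explicit module isomorphism \(\A_0\cong\T_s\) as a byproduct, while your approach is structurally cleaner, invokes lemma~\ref{lemma:envalg} directly, and the \(V\) you construct is the left regular representation of \(\A_0\).
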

{\par\em Proof.}
  Recall from the proof of Lemma~\ref{lemma:envalg} that \(\A_0=\T/\I\).  If we
  define \(\T_s\) to be the submodule of \(\T\) spanned by the ordered products
  \(x_{i_1}\otimes\cdots\otimes x_{i_n}\;(i_1\leq\ldots\leq i_n)\), then what
  we must show is that \(\T\) is the direct sum of modules \(\T_s\) and \(\I\),
  that is \(\T=\T_s\oplus\I\) and \(\T_s\cap\I=\emptyset\).  We shall do this
  by showing that each element of \(\T\) has a unique decomposition into an
  element of \(\T_s\) and an element of the ideal~\(\I\).  Clearly, for
  elements of \(T_s\) or \(\I\) we must have
  \begin{eqnarray*}
    x_{i_1}\otimes\cdots\otimes x_{i_n} &=&
    \Bigl\{x_{i_1}\otimes\cdots\otimes x_{i_n}\Bigr\} \oplus 0
    (i_1\leq\ldots\leq i_n),
  \end{eqnarray*}
  \begin{eqnarray*}
    \lefteqn{\alpha\otimes(x\otimes y - y\otimes x - [x,y])\otimes\beta} & &
    \qquad \\
    &=& 0 \oplus \Bigl\{\alpha\otimes(x\otimes y - y\otimes x - [x,y])
    \otimes\beta\Bigr\} (\alpha,\beta\in\T).
  \end{eqnarray*}
  For any other basis element of \(\T\) where the factors are not in increasing
  order we have
  \begin{eqnarray*}
   \lefteqn{ \alpha\otimes y\otimes x\otimes\beta } \qquad & & \\
      & = & \Bigl\{\alpha\otimes x\otimes y\otimes\beta - \alpha\otimes[x,y]
        \otimes\beta\Bigr\} \\
      & &   \oplus\Bigl\{
          \alpha\otimes(y\otimes x - x\otimes y + [x,y])\otimes\beta\Bigr\}.
  \end{eqnarray*}
  where \(x<y\).  All that remains to show is that this decomposition is
  well-defined and does not depend upon the order in which we apply the
  preceding identity.  The only non-trivial case occurs when there are three
  adjacent out-of-order factors, for which we have both
\begin{widetext}
  \begin{eqnarray*}
    \alpha\otimes z\otimes y\otimes x\otimes\beta
      &=& \Bigl\{\alpha\otimes z\otimes x\otimes y\otimes\beta
        - \alpha\otimes z\otimes[x,y]\otimes\beta\Bigr\}
      \oplus\;\Bigl\{
        \alpha\otimes z\otimes(y\otimes x - x\otimes y + [x,y])
          \otimes\beta\Bigr\} \\
      &=& \Bigl\{\alpha\otimes x\otimes z\otimes y\otimes\beta
        - \alpha\otimes[x,z]\otimes y\otimes\beta
        - \alpha\otimes z\otimes[x,y]\otimes\beta\Bigr\} \\
      &&\oplus\;\Bigl\{
        \alpha\otimes(z\otimes x - x\otimes z + [x,z])\otimes y
          \otimes\beta+\alpha\otimes z\otimes(y\otimes x - x\otimes y + [x,y])
          \otimes\beta\Bigr\} \\
      &=& \Bigl\{\alpha\otimes x\otimes y\otimes z\otimes\beta
        - \alpha\otimes x\otimes[y,z]\otimes\beta
	- \alpha\otimes[x,z]\otimes y\otimes\beta
        - \alpha\otimes z\otimes[x,y]\otimes\beta\Bigr\} \\
      &&\oplus\;\Bigl\{
        \alpha\otimes x\otimes(z\otimes y - y\otimes z + [y,z])
          \otimes\beta
	  + \alpha\otimes(z\otimes x - x\otimes z + [x,z])\otimes y
          \otimes\beta \\
      &&\qquad + \alpha\otimes z\otimes(y\otimes x - x\otimes y + [x,y])
          \otimes\beta\Bigr\} \\
  \end{eqnarray*}
  and
  \begin{eqnarray*}
    \alpha\otimes z\otimes y\otimes x\otimes\beta
      &=& \Bigl\{\alpha\otimes y\otimes z\otimes x\otimes\beta
        - \alpha\otimes[y,z]\otimes x\otimes\beta\Bigr\} 
	\oplus\;\Bigl\{
        \alpha\otimes(z\otimes y - y\otimes z + [y,z])\otimes x
          \otimes\beta\Bigr\} \\
      &=& \Bigl\{\alpha\otimes y\otimes x\otimes z\otimes\beta
        - \alpha\otimes y\otimes[x,z]\otimes\beta
        - \alpha\otimes[y,z]\otimes x\otimes\beta\Bigr\} \\
      &&\oplus\;\Bigl\{
        \alpha\otimes y\otimes(z\otimes x - x\otimes z + [x,z])
          \otimes\beta + \alpha\otimes(z\otimes y-y\otimes z + [y,z])\otimes x
          \otimes\beta\Bigr\} \\
      &=& \Bigl\{\alpha\otimes x\otimes y\otimes z\otimes\beta
        - \alpha\otimes[x,y]\otimes z\otimes\beta 
	- \alpha\otimes y\otimes[x,z]\otimes\beta
        - \alpha\otimes[y,z]\otimes x\otimes\beta\Bigr\} \\
      &&\oplus\;\Bigl\{
        \alpha\otimes(y\otimes x - x\otimes y + [x,y])\otimes z
          \otimes\beta  + \alpha\otimes y\otimes(z\otimes x-x\otimes z + [x,z])
          \otimes\beta \\
      &&\qquad + \alpha\otimes(z\otimes y - y\otimes z + [y,z])\otimes x
          \otimes\beta\Bigr\}.
  \end{eqnarray*}
  These two values differ by
  \begin{eqnarray*}
    &&\Bigl\{\alpha\otimes x\otimes[y,z]\otimes\beta
      + \alpha\otimes[x,z]\otimes y\otimes\beta
      + \alpha\otimes z\otimes[x,y]\otimes\beta \\
    &&\qquad\qquad\quad - \alpha\otimes[x,y]\otimes z\otimes\beta
      - \alpha\otimes y\otimes[x,z]\otimes\beta
      - \alpha\otimes[y,z]\otimes x\otimes\beta\Bigr\} \\
    &&\qquad\quad\oplus\;\Bigl\{\alpha\otimes x\otimes[y,z]\otimes\beta
      + \alpha\otimes[x,z]\otimes y\otimes\beta
      + \alpha\otimes z\otimes[x,y]\otimes\beta \\
    &&\qquad\qquad\quad - \alpha\otimes[x,y]\otimes z\otimes\beta
      - \alpha\otimes y\otimes[x,z]\otimes\beta
      - \alpha\otimes[y,z]\otimes x\otimes\beta\Bigr\} \\
    &&\quad \:=\: \Bigl\{
      \alpha\otimes(x\otimes[y,z] - [y,z]\otimes x)\otimes\beta
      + \alpha\otimes([x,z]\otimes y - y\otimes[x,z])\otimes\beta 
      + \alpha\otimes(z\otimes[x,y] - [x,y]\otimes z)
      \otimes\beta\Bigr\} \\
    &&\qquad\quad\oplus\;\Bigl\{\alpha\otimes[x,z]\otimes y\otimes\beta
        + \alpha\otimes z\otimes[x,y]\otimes\beta
        - \alpha\otimes[x,y]\otimes z\otimes\beta - \alpha\otimes
      y\otimes[x,z]\otimes\beta
        - \alpha\otimes[y,z]\otimes x\otimes\beta\Bigr\} \\
    &&\quad \:=\: \Bigl\{
        \alpha\otimes([x,[y,z]] + [y,[z,x]] + [z,[x,y]])\otimes\beta\Bigr\}
      \oplus\;\Bigl\{ 
          \alpha\otimes([x,[y,z]] + [y,[z,x]] + [z,[x,y]])\otimes\beta\Bigr\}
      \:=\: 0
  \end{eqnarray*}
  where we have used the Jacobi identity.
\badbreak$\;\bsquare$\smallskip
\end{widetext}

\subsection{Free Lie Algebras}

We are interested in the properties of the Lie algebra generated by some set of
operators, but {\em a priori} we know nothing about the nature of the Lie
brackets of these generators.  We therefore wish to work in the context of the
most general Lie algebra which can be constructed from these generators, for
which all Lie brackets are assumed distinguishable unless they are related by
the defining relations~(\ref{eq:liedef}): any further relations between Lie
brackets may be applied {\em post facto}.  More formally this means that we
wish to carry out our calculations in the {\em free Lie algebra} over our set
of generators.

Let \(\L_0\) be a Lie Algebra over \(K\), and \(A\) a set with a mapping
\(i:A\to\L_0\). \(\L_0\) is called {\em free} on \(A\) if for any Lie algebra
\(\L\) and any mapping \(f:A\to\L\) there is a unique Lie algebra homomorphism
\(\bar f:\L_0\to\L\) such that \(\bar f\comp i=f\).  This is a universal
definition, as shown by
\begin{theorem}
  For every set \(A\) there is a unique free Lie algebra \(\L(A)\) on \(A\),
  \(\L(A)\) is a naturally graded \(K\)-module, \(i\) is an injection, the
  component of \(\L(A)\) of degree~1 is the free submodule generated by
  \(i(A)\), and \(\L(A)\) is generated as a Lie algebra by~\(A\).
\end{theorem}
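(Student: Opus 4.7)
The plan is to construct $\L(A)$ explicitly as a Lie subalgebra of a tensor algebra, read off the grading and injectivity of $i$ from that construction, verify the universal property by routing through the enveloping algebra of any target Lie algebra, and finally obtain uniqueness by the usual universal-property argument.

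First, I would dispose of uniqueness. If $(\L_1,i_1)$ and $(\L_2,i_2)$ are both free Lie algebras on $A$, the universal property of $\L_1$ applied to $i_2:A\to\L_2$ yields a unique Lie homomorphism $f:\L_1\to\L_2$ with $f\comp i_1=i_2$, and symmetrically $g:\L_2\to\L_1$ with $g\comp i_2=i_1$. Then $g\comp f$ and $\mathrm{id}_{\L_1}$ both satisfy the universal property lifting $i_1$, so they coincide; similarly $f\comp g=\mathrm{id}_{\L_2}$, giving a canonical isomorphism.

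For existence, let $M$ be the free $K$-module on $A$ and let $\T=\bigoplus_{n\ge 0}M^{\otimes n}$ be its tensor algebra, equipped with its canonical commutator Lie structure $[x,y]\defn x\otimes y-y\otimes x$. I would define $\L(A)$ to be the smallest Lie subalgebra of $\T$ containing the copy of $A$ sitting inside $M=\T_1$, and take $i:A\to\L(A)$ to be the evident injection. The $\N$-grading of $\T$ by tensor degree is respected by the commutator bracket, and since $\L(A)$ is generated by homogeneous elements of degree~$1$, it is a graded $K$-submodule of $\T$. Any iterated Lie bracket of two or more generators lies in tensor degree $\ge 2$, so $\L(A)_1$ coincides with the $K$-span of $i(A)$ in $M$, which is exactly the free $K$-module on $i(A)$. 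Injectivity of $i$ and the fact that $\L(A)$ is generated as a Lie algebra by $A$ are immediate from the construction.

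To verify the universal property, given any Lie algebra $\L$ over $K$ and any set map $f:A\to\L$, let $\A_0$ be the enveloping algebra of $\L$ from Lemma~\ref{lemma:envalg}. By the universal property of the tensor algebra, the composite $\phi_0\comp f:A\to\A_0$ extends uniquely to an associative $K$-algebra homomorphism $\tilde\phi:\T\to\A_0$, which is automatically a Lie homomorphism for the commutator brackets. The restriction $\tilde\phi|_{\L(A)}$ lands in the Lie subalgebra of $\A_0$ generated by $\tilde\phi(i(A))=\phi_0(f(A))\subset\phi_0(\L)$; since $\phi_0(\L)$ is itself a Lie subalgebra of $\A_0$, the image lies inside $\phi_0(\L)$. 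Invoking the Poincar\'e--Birkhoff--Witt theorem for $\L$ (assumed free as a $K$-module, as is standard in this setting) gives injectivity of $\phi_0$, so one may define $\bar f\defn\phi_0^{-1}\comp\tilde\phi|_{\L(A)}:\L(A)\to\L$. By construction $\bar f\comp i=f$, and $\bar f$ is unique because $i(A)$ generates $\L(A)$ as a Lie algebra.

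The main obstacle is precisely this last step: descending the map into $\A_0$ back down to $\L$, which rests on the injectivity of $\phi_0:\L\to\A_0$. That is a genuine PBW input, and is the only place where serious algebra is required; the tensor algebra construction, the grading argument, the injectivity of $i$, and the abstract uniqueness are all essentially bookkeeping.
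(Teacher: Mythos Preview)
The paper states this theorem without proof, so there is nothing to compare against directly. Your argument is one of the standard routes and is largely sound: constructing \(\L(A)\) as the Lie subalgebra of the tensor algebra generated by \(A\) makes the grading, the injectivity of \(i\), and the identification of the degree-\(1\) component immediate, and the uniqueness argument is the usual one.

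The only point worth flagging is the one you already identify: routing the universal property through the enveloping algebra of the target \(\L\) and then inverting \(\phi_0\) genuinely requires PBW for \(\L\), hence that \(\L\) be free as a \(K\)-module. Since the theorem as stated demands the universal property with respect to \emph{every} Lie algebra over \(K\), this leaves a gap in full generality. The standard fix is to reverse the logic: define \(\L(A)\) instead as the quotient of the free non-associative \(K\)-algebra on the free magma \(M(A)\) by the ideal generated by the antisymmetry and Jacobi relations; the universal property is then tautological for arbitrary \(\L\), and the remaining claims (injectivity of \(i\), identification of the degree-\(1\) component) follow from PBW applied to \(\L(A)\) itself, or equivalently from the Hall basis that the paper describes immediately after this theorem.
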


\subsection{Hall Bases}

Central to calculations in free Lie algebras is the question of how many
independent basis elements are there of a given degree, and how to reduce an
arbitrary expression to canonical form in terms of such a basis.
\begin{definition}[Hall Trees~\cite{reutenauer93a,hall33a,magnus37a,hall50a}]
  Given a set \(A\) consider the set of all binary trees with leaf nodes
  labelled by elements of \(A\): this set is called the {\em free Magma}
  \(M(A)\).  We shall denote the tree \(h\) whose left subtree is \(h'\) and
  whose right subtree is \(h''\) by~\(h=(h',h'')\).  A {\em Hall set} \(H\) is
  a totally ordered subset of \(M(A)\) containing \(A\) which satisfies
  \begin{displaymath}
    \begin{array}{c}
      h < h''\qquad\forall h=(h',h'')\in H-A \\
      h=(h',h'')\in H\\\iff\left\{
      \begin{array}{l}
        \hbox{\(h',h''\in H\) and \(h'<h''\) and} \\
        \hbox{either \(h'\in A\) or \(h'=(x,y)\) and \(y\geq h''\).}
      \end{array}\right.
    \end{array}
  \end{displaymath}
\end{definition}
There is map \(f:M(A)\to\L(A)\) defined by \(f(a)\mapsto a\) if \(a\in A\), and
\(f:(h',h'')\mapsto[f(h'),f(h'')]\); the result of applying this map to a Hall
tree gives a {\em Hall word}, and the Hall words corresponding to any Hall set
form a {\em Hall basis} for \(\L(A)\) (as a \(K\)-module).

We shall use the following Hall basis for our calculations: \(x\in H\) iff
\begin{displaymath}
  \begin{array}{r@{\quad}l@{\qquad}l}
    & x\in A \\
    \hbox{or} & x=[y,z] & y,z\in H,\;y<z \\
    \hbox{or} & x=[y,[z,u]] & y,z,u\in H,\;z<y \\
    \hbox{and} & x<y &\hbox{if \(\deg x<\deg y\)}.
  \end{array}
\end{displaymath}
Any expression built out of Lie brackets may be reduced to canonical form by
the applying the following transformations:
\begin{displaymath}
  \begin{array}{rl@{\quad}l}
    [sx+ty,z] &\mapsto s[x,y] + t[y,z] &\hbox{where \(s,t\in K\)} \\ \relax
    [z,y] &\mapsto -[y,z] &\hbox{if \(y<z\)} \\ \relax
    [y,[z,u]] &\mapsto -[u,[y,z]] + [z,[y,u]] &\hbox{if \(y<z<u\),}
  \end{array}
\end{displaymath}
with the elements of \(A\) themselves ordered lexicographically (i.e.,
alphabetically).

For generating set \(\{A,B\}\) this leads to the following basis:
\begin{displaymath}
  \begin{array}{l}
    \Bigl\{A,\quad B\Bigr\}, \qquad
    \Bigl\{[A,B]\Bigr\}, \\[1ex]
    \Bigl\{[A,[A,B]],
      \quad [B,[A,B]]\Bigr\}, \\ [1ex]
    \Bigl\{[A,[A,[A,B]]],
      \quad [B,[A,[A,B]]],
      \quad [B,[B,[A,B]]]\Bigr\}, \\ [1ex]
    \Bigl\{[[A,B],[B,[A,B]]],
      \quad [B,[B,[B,[A,B]]]], \\ [1ex]
      \qquad\quad [A,[A,[A,[A,B]]]], 
      \quad [B,[A,[A,[A,B]]]], \\ [1ex]
      \qquad\quad [[A,B],[A,[A,B]]],
      \quad [B,[B,[A,[A,B]]]]\Bigr\}, \\
    \ldots
  \end{array}
\end{displaymath}

\subsection{Dimension of Hall Bases}

The dimension of the Hall basis of degree \(N\) on a set of cardinality \(q\)
is given by Witt's formula \cite{witt37a} \(a_N = {1\over N} \sum_{d|N} \mu(d)
q^{N/d},\) where \(\mu\) is the M\"obius function.

By the Poincar\'e--Birkhoff--Witt theorem the set of ordered (symmetric)
monomials on the independent commutators is a basis for the universal
enveloping algebra of the free Lie algebra.  A basis for words of length \(N\)
is therefore provided by symmetric products of \(n_k\) words of length \(k\)
chosen from the \(a_k\) generators of the free Lie algebra, where
\(\sum_{k\geq1}kn_k = N\).  There are exactly \((-)^{n_k}{-a_k\choose n_k} =
{a_k+n_k-1\choose n_k}\) ways of choosing these \(n_k\) words symetrically, and
this is the coefficient of \(x^k\) in the series expansion of \((1-x^k)^{-a_k}
= \sum_{n_k\geq0} {-a_k\choose n_k}(-x^k)^{n_k}\).  The total number of words
of length \(N\) is thus the coefficient of \(x^N\) in the generating function
\(g\defn\prod_{k\geq1} (1-x^k)^{-a_k}\).  On the other hand, the universal
enveloping algebra is just the free algebra on \(q\) symbols, so there are
\(q^N\) independent basis elements for words of length \(N\), and thus \(a_k\)
is determined from the equation \(g = \sum_{k=0}^\infty q^kx^k = (1-qx)^{-1}\).
Witt's solution is obtained by taking the logarithm of this equation,
\(-\sum_{k\geq1} a_k\ln(1-x^k) = -\ln(1-qx)\), and equating the coefficients of
\(x^N\), \(\sum_{d|N} a_d/(N/d) = q^N/N\).  Using the M\"obius inversion
formula we obtain \(a_N = {1\over N}\sum_{d|N} \mu(d)q^{N/d}\).

The number of independent commutators on \(q\) letters is therefore
\begin{eqnarray*}
    a_1^{(q)} &=& q, \\
    a_2^{(q)} &=& \half q(q-1), \\
    a_3^{(q)} &=& \third q(q-1)(q+1), \\
    a_4^{(q)} &=& \quarter q^2(q-1)(q+1), \\
    a_5^{(q)} &=& \rational15 q(q-1)(q+1)(q^2+1), \\
    a_6^{(q)} &=& \rational16 q(q-1)(q+1)(q^3+q-1), \\
    \ldots
\end{eqnarray*}
More specifically we have \(a_1^{(2)}=2\), \(a_2^{(2)}=1\), \(a_3^{(2)}=2\),
\(a_4^{(2)}=3\), \(a_5^{(2)}=6\), \(a_6^{(2)}=9\), \(a_7^{(2)}=18\),
\(a_8^{(2)}=30\), \(a_9^{(2)}=56\), \(a_{10}^{(2)}=99\), \(a_{11}^{(2)}=186\),
\(a_{12}^{(2)}=335\), and so on.

\subsection{Baker--Campbell--Hausdorff Formula}

\begin{theorem}[\cite{reutenauer93a}] \label{thm:dps}
  Let \(M\) be an associative algebra over a commutative ring \(K\supset\Q\)
  and let \(d\) be a derivation on \(M\).  For any power series \(f(X) =
  \sum_{n\geq0} a_nX^n\) we have
  \begin{equation}
    df(X) = \sum_{k\geq1} {1\over k!} (\ad X)^{k-1}(dX) f^{(k)}(X).
  \label{eq:dps}
  \end{equation}
  where \(\ad a: b \mapsto [a,b]\).
\end{theorem}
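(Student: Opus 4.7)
The plan is to reduce to the case of monomials $f(X) = X^n$ by $K$-linearity and prove the formula by induction on $n$. For $f(X)=X^n$ we have $\tfrac{1}{k!}f^{(k)}(X)=\binom{n}{k}X^{n-k}$, so the assertion becomes the purely algebraic identity
\[
  dX^n \;=\; \sum_{k=1}^{n}\binom{n}{k}\,(\ad X)^{k-1}(dX)\,X^{n-k},
\]
which involves no division and makes sense over any ring. The base cases $n=0$ (so $dX^0=d(1)=0$, matching the empty sum) and $n=1$ are immediate.

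For the inductive step I would apply the Leibniz rule for the derivation $d$,
\[
  dX^{n+1} \;=\; (dX)\,X^n \;+\; X\cdot dX^n,
\]
substitute the inductive hypothesis for $dX^n$, and use the single elementary commutation identity
\[
  X\cdot(\ad X)^{k-1}(dX) \;=\; (\ad X)^{k-1}(dX)\cdot X \;+\; (\ad X)^k(dX),
\]
which is just the definition $(\ad X)(Y)=XY-YX$ with $Y=(\ad X)^{k-1}(dX)$. This pushes each factor of $X$ to the right of the commutator tower and produces two sums indexed by the power of $\ad X$. Collecting the coefficient of $(\ad X)^{j}(dX)\,X^{n-j}$ in the resulting expression, Pascal's identity $\binom{n}{j}+\binom{n}{j+1}=\binom{n+1}{j+1}$ consolidates the contributions, with the endpoint $j=0$ term receiving an extra $(dX)X^n$ to make $1+\binom{n}{1}=\binom{n+1}{1}$, and the top endpoint giving $\binom{n}{n}=\binom{n+1}{n+1}$. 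This yields the formula at $n+1$ and closes the induction.

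To pass from monomials to a general power series $f(X)=\sum_n a_n X^n$, I would multiply the monomial identity by $a_n$, sum over $n$, and interchange the order of summation (legitimate in the formal-power-series sense, since for each fixed $k$ the sum $\sum_{n\ge k}\binom{n}{k}a_n X^{n-k}$ is precisely $\tfrac{1}{k!}f^{(k)}(X)$). The hypothesis $\Q\subset K$ is needed only to give meaning to the notation $f^{(k)}/k!$ in the stated form; the monomial version requires nothing more than that $d$ is a derivation.

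The main obstacle is purely the combinatorial bookkeeping in the induction step, which is controlled by Pascal's identity. Conceptually the proof rests on a single observation: $\ad X$ measures the obstruction to $X$ commuting past $dX$, so the $n$ terms $X^i(dX)X^{n-1-i}$ that arise naturally from the Leibniz expansion of $dX^n$ can all be rearranged into the single ordered form appearing in~(\ref{eq:dps}), at the price of introducing higher $(\ad X)$-commutators weighted by binomial coefficients.
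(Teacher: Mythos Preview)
Your proposal is correct and follows essentially the same route as the paper: reduce by linearity to $f(X)=X^n$, prove $dX^n=\sum_{k=1}^n\binom{n}{k}(\ad X)^{k-1}(dX)\,X^{n-k}$ by induction using the Leibniz rule $dX^{n+1}=dX\cdot X^n+X\cdot dX^n$, the commutation identity $Xu=(\ad X)u+uX$, and Pascal's identity. The only cosmetic difference is that you spell out the passage from monomials back to power series, which the paper leaves implicit.
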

{\par\em Proof.}
  Equation~(\ref{eq:dps}) is linear in \(f\), so it suffices to consider \(f(X)
  = X^n\), for which we have
  \begin{displaymath}
    dX^n = \sum_{k=1}^n {n\choose k} (\ad X)^{k-1}(dX) X^{n-k}.
  \end{displaymath}
  We shall prove this by induction on \(n\): for \(n=0\) it is trivially true,
  and
\begin{widetext}
  \begin{displaymath}
    dX^{n+1} = d(X\,X^n) = dX\,X^n + X\,dX^n
      = dX\,X^n + X \sum_{k=1}^n {n\choose k} (\ad X)^{k-1}(dX) X^{n-k}.
  \end{displaymath}
  Using the identity \(Xu = [X,u] + uX = (\ad X)u + uX\) we obtain
  \begin{eqnarray*}
    &&dX^{n+1} = dX\,X^n + \sum_{k=1}^n {n\choose k} (\ad X)^k(dX) X^{n-k}
        + \sum_{k=1}^n {n\choose k} (\ad X)^{k-1}(dX) X^{n+1-k} \\
    &&\qquad = dX\,X^n
      + \sum_{k=2}^{n+1} {n\choose k-1} (\ad X)^{k-1}(dX) X^{n+1-k}
        + \sum_{k=1}^n {n\choose k} (\ad X)^{k-1}(dX) X^{n+1-k} \\
    &&\qquad = dX\,X^n + \sum_{k=2}^n \left[{n\choose k-1} + {n\choose
        k}\right] (\ad X)^{k-1}(dX) X^{n+1-k} + (\ad X)^n(dX) + n\,dX X^n \\
    &&\qquad = \sum_{k=1}^{n+1} {n+1\choose k} (\ad X)^{k-1}(dX) X^{n+1-k}.
  \end{eqnarray*}
{\badbreak$\;\bsquare$\smallskip}
\end{widetext}

\begin{corollary}
  \begin{eqnarray*}
    de^X & = & g(\ad X)(dX) e^X\\
\end{eqnarray*}
\(where\)
\begin{eqnarray*}
    g(x) & = & \sum_{k\geq1} {x^{k-1}\over k!} = {e^x - 1\over x}.
  \end{eqnarray*}
  \label{cor:a}
\end{corollary}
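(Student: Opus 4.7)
The plan is to obtain the corollary as a direct specialization of Theorem~\ref{thm:dps} to the power series $f(X) = e^X = \sum_{n \geq 0} X^n/n!$. The key observation is that this particular series is a fixed point of formal differentiation, so all its higher derivatives satisfy $f^{(k)}(X) = e^X$. Once this is used, every term in the general formula contributes the same factor $e^X$ on the right, which can be pulled out of the sum.

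Concretely, I would first note that $f(X) = e^X$ is a well-defined element of a suitable completion of the associative algebra $M$ (since $K \supset \mathbb{Q}$, the coefficients $a_n = 1/n!$ make sense), so that Theorem~\ref{thm:dps} applies. Then I substitute into equation~(\ref{eq:dps}):
\begin{displaymath}
  de^X = \sum_{k \geq 1} \frac{1}{k!} (\ad X)^{k-1}(dX)\, f^{(k)}(X).
\end{displaymath}
Since $f^{(k)}(X) = e^X$ for all $k \geq 1$, every summand has $e^X$ as its rightmost factor, so
\begin{displaymath}
  de^X = \left(\sum_{k \geq 1} \frac{(\ad X)^{k-1}}{k!}(dX)\right) e^X
       = g(\ad X)(dX)\, e^X,
\end{displaymath}
where the power series
\begin{displaymath}
  g(x) = \sum_{k \geq 1} \frac{x^{k-1}}{k!} = \frac{1}{x}\sum_{k \geq 1}\frac{x^k}{k!} = \frac{e^x - 1}{x}
\end{displaymath}
is the one named in the statement.

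There really is no substantial obstacle here: the only mild subtlety is the bookkeeping that $e^X$ appears on the right of $(\ad X)^{k-1}(dX)$ rather than being intertwined with it, which is guaranteed by the form of Theorem~\ref{thm:dps}. The identification $g(x) = (e^x-1)/x$ is then a one-line manipulation of formal power series, valid because $K \supset \mathbb{Q}$ so the factorials are invertible.
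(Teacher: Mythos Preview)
Your proof is correct and follows exactly the same route as the paper: apply Theorem~\ref{thm:dps} to $f(X)=e^X$, use $f^{(k)}(X)=e^X$ for all $k\geq1$, factor $e^X$ out on the right, and identify the remaining series with $g(\ad X)$. The paper's proof is the same argument compressed to a single line.
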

\begin{proof}
  By Theorem~\ref{thm:dps} we have
  \begin{displaymath}
    de^X = \sum_{k\geq1} {1\over k!} (\ad X)^{k-1}(dX) e^X
      = g(\ad X)(dX) e^X.
  \end{displaymath}
\end{proof}

\begin{definition}
  We define the {\em Hausdorff series} as
  \begin{displaymath}
    H \defn \sum_{n\geq1} c_n(A,B)
    \qquad\hbox{where}\qquad
    e^H \defn e^Ae^B,
  \end{displaymath}
  and the \(c_n\) are homogeneous of degree \(n\) in the generators \(A\) and
  \(B\).
\end{definition}

\begin{theorem} \label{thm:dH}
  Using the derivations\footnote{If \(K=\R\) these definitions are equivalent
  to \(d_Af(A,B) = \left.\dd{f(tA,B)}t\right|_{t=1}\), \(d_Bf(A,B) = \left.
  \dd{f(A,tB)}t\right|_{t=1}\) and \(df(A,B) = \left.\dd{f(tA,tB)}t
  \right|_{t=1} \).} \(d_A\) and \(d_B\) defined by \(d_AA = A\), \(d_AB = 0\),
  \(d_BA=0\), and \(d_BB=B\), and setting \(d \defn d_A + d_B\), we have
  \begin{displaymath}
    dH = \left({\ad H\over2}\coth{\ad H\over2}\right)(A+B)
      - \left({\ad H\over2}\right)(A-B).
  \end{displaymath}
\end{theorem}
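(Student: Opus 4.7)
The plan is to apply the derivations $d_A$ and $d_B$ separately to the defining identity $e^H = e^A e^B$, use Corollary~\ref{cor:a} (which says $de^X = g(\ad X)(dX)\, e^X$ with $g(x)=(e^x-1)/x$) to isolate $d_A H$ and $d_B H$, and then recombine.

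For $d_A$: since $d_A B = 0$, $d_A(e^A e^B) = (d_A e^A)\, e^B$. By Corollary~\ref{cor:a}, $d_A e^A = g(\ad A)(d_A A)\, e^A = g(\ad A)(A)\, e^A$, and because $(\ad A)(A) = [A,A] = 0$ only the $k=1$ term of the defining series of $g$ contributes, giving $d_A e^A = A\, e^A$. Equating with $d_A e^H = g(\ad H)(d_A H)\, e^H$ and cancelling $e^H$ on the right yields $g(\ad H)(d_A H) = A$, which I formally invert to $d_A H = \bigl(\ad H/(e^{\ad H}-1)\bigr)(A)$. For $d_B$ the analogous computation gives $d_B(e^A e^B) = e^A B\, e^B = e^H B$; I then use the adjoint identity $e^H B\, e^{-H} = e^{\ad H}(B)$ to rewrite this as $e^{\ad H}(B)\, e^H$, and after the same cancellation obtain $d_B H = \bigl(\ad H \cdot e^{\ad H}/(e^{\ad H}-1)\bigr)(B)$.

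Adding and writing $x \defn \ad H$, I decompose $A = \half(A+B) + \half(A-B)$ and $B = \half(A+B) - \half(A-B)$ to obtain
\[
dH = \frac{x}{2}\cdot\frac{1+e^x}{e^x-1}(A+B) + \frac{x}{2}\cdot\frac{1-e^x}{e^x-1}(A-B).
\]
Multiplying numerator and denominator of the first rational factor by $e^{-x/2}$ converts it into $\frac{x}{2}\coth(x/2)$, while the second factor collapses to $-x/2$ since $(1-e^x)/(e^x-1) = -1$; this reproduces the stated formula.

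The main obstacle is the step $e^H B = e^{\ad H}(B)\, e^H$: this conversion of a right-multiplication by $B$ into the adjoint action of $H$ on $B$ is what makes it possible to cancel $e^H$ on the right and solve algebraically for $d_B H$. A secondary technical point is the formal inversion of $g(\ad H)$, which is legitimate in the graded completion of the free Lie algebra on $\{A,B\}$: since $H$ has no degree-zero component, $(e^{\ad H}-1)/\ad H$ is an invertible power series in $\ad H$. Everything else is routine series manipulation.
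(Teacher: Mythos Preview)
Your proof is correct and follows the same overall strategy as the paper: apply Corollary~\ref{cor:a} to compute \(d_AH\) and \(d_BH\) separately, invert \(g(\ad H)\), and recombine. The one tactical difference is in the \(d_B\) step: you push \(B\) past \(e^H\) using the conjugation identity \(e^HBe^{-H}=e^{\ad H}(B)\), whereas the paper instead applies \(d_B\) to the inverted relation \(e^{-H}=e^{-B}e^{-A}\), which places \(-B\) on the left immediately and yields \(d_BH=[g(-\ad H)]^{-1}B\) without any extra identity. Both routes give \(d_BH=\bigl(\ad H\,e^{\ad H}/(e^{\ad H}-1)\bigr)(B)\), and the final recombination is the same; the paper's variant simply avoids the step you flagged as the ``main obstacle.''
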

For any derivation \(D\) we have \(D1 = D(1\cdot1) = D1\cdot1 + 1\cdot D1 =
2\,D1\), so \(D1 = 0\).  It is also trivial to verify inductively that \(d_XX^n
= nX^n\) and \(d_Xe^X = Xe^X\).
\begin{proof}
  Using Corollary~\ref{cor:a} we obtain
  \begin{eqnarray*}
    d_Ae^H &=& d_A(e^Ae^B) = d_Ae^A\,e^B + e^A\,d_Ae^B\\
      & = & Ae^H = g(\ad H)(d_AH) e^H, \\
    d_Be^{-H} &=& d_B(e^{-B}e^{-A}) = d_Be^{-B}\,e^{-A} + e^{-B}\,d_Be^{-A}\\
      & = & -Be^{-H} = g\Bigl(\ad(-H)\Bigr)\Bigl(d_B(-H)\Bigr) e^{-H};
  \end{eqnarray*}
  whence
  \begin{eqnarray*}
    A &=& g(\ad H)(d_AH),\\
    B &=& g(-\ad H)(d_BH)
  \end{eqnarray*}
  and
  \begin{eqnarray*}
    d_AH &=& [g(\ad H)]^{-1}A, \\
    d_BH &=& [g(-\ad H)]^{-1}B.
  \end{eqnarray*}
  We now observe that \(1/g(x)\) may be decomposed into the sum of an even and
  an odd function
  \begin{displaymath}
    {1\over g(\pm x)} = {x\over2}\left(\coth{x\over2} \mp 1\right),
  \end{displaymath}
  and the desired result follows immediately since \(d = d_A + d_B\).
\end{proof}

\begin{definition}
  The {\em Bernoulli numbers} \(B_n\) are defined by
  \begin{displaymath}
    {1\over g(x)} = {x\over e^x - 1} \defn \sum_{n\geq0} {B_nx^n\over n!}.
  \end{displaymath}
\end{definition}
Taking the symmetric part of this expression we see that
\begin{displaymath}
  {x\over2}\coth{x\over2} = \sum_{m\geq0} {B_{2m}x^{2m}\over(2m)!}.
\end{displaymath}

\begin{widetext}
\begin{theorem}[\cite{baker05a,campbell97a,campbell98a,hausdorff06a,czyz94a,%
varadarajan74a}]
  The terms in the Hausdorff series \(c_n(A,B)\) are given by the recursion
  relations
  \begin{displaymath}
    c_{n+1} = {1\over n+1} \biggl\{-\half[c_n,A-B]
      + \sum_{m=0}^{\lfloor n/2\rfloor} {B_{2m}\over(2m)!}\!
        \sum_{{k_1,\ldots,k_{2m}\geq1}\atop{k_1+\cdots+k_{2m}=n}}\!\!
          [c_{k_1},[\ldots,[c_{k_{2m}}, A+B]\ldots]] \biggr\}.
  \end{displaymath}
\end{theorem}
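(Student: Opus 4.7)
The plan is to apply Theorem \ref{thm:dH} and extract the homogeneous component of total degree $n+1$ from both sides of the resulting identity. First I would observe that $d = d_A + d_B$ is a grading derivation: since $d_A$ (respectively $d_B$) counts occurrences of $A$ (respectively $B$) in any monomial, $d$ multiplies every monomial by its total degree. Because each $c_n$ is homogeneous of degree $n$ in the generators by definition, this yields $d c_n = n c_n$, and hence
\[
  dH = \sum_{n\geq 1} n\,c_n.
\]

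Next I would substitute the Bernoulli series $\frac{x}{2}\coth\frac{x}{2} = \sum_{m\geq 0}\frac{B_{2m}}{(2m)!}x^{2m}$ into the right-hand side of Theorem \ref{thm:dH} to obtain
\[
  \sum_{n\geq 1} n\,c_n = \sum_{m\geq 0}\frac{B_{2m}}{(2m)!}(\ad H)^{2m}(A+B) - \rational12[H,A-B].
\]
Writing $H = \sum_{k\geq 1} c_k$ and using multilinearity of $\ad$, the operator $(\ad H)^{2m}$ expands as $\sum_{k_1,\ldots,k_{2m}\geq 1}\ad c_{k_1}\cdots \ad c_{k_{2m}}$, so that the $m$-th summand becomes a sum of nested commutators $[c_{k_1},[\ldots,[c_{k_{2m}},A+B]\ldots]]$ indexed by positive tuples $(k_1,\ldots,k_{2m})$.

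Finally I would match homogeneous components of degree $n+1$. The left side yields $(n+1)c_{n+1}$; the term $-\rational12[H,A-B]$ contributes only $-\rational12[c_n,A-B]$ since $A-B$ has degree $1$; and the $m$-th coth term contributes precisely those nested commutators with $k_1+\cdots+k_{2m} = n$, so that the positivity constraint $k_i\geq 1$ automatically truncates the outer sum at $m=\lfloor n/2\rfloor$. Dividing through by $n+1$ yields the stated recursion.

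The argument is essentially bookkeeping once Theorem \ref{thm:dH} is in hand; the conceptual hinge is simply recognizing that $d$ acts as the grading operator, after which multilinearity of $\ad$ and the series definition of the Bernoulli numbers do all the work. The only mild subtlety — and where one might slip — is in handling the $m=0$ summand, which vanishes for $n\geq 1$ but must reproduce the initial datum $c_1 = A+B$ at $n=0$ under the empty-composition convention $(\ad H)^0(A+B) = A+B$.
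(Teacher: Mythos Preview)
Your proposal is correct and follows essentially the same approach as the paper: apply Theorem~\ref{thm:dH}, use that $d$ acts as the grading operator so $dH=\sum_{n\geq1}nc_n$, expand $\frac{x}{2}\coth\frac{x}{2}$ in Bernoulli numbers, expand $(\ad H)^{2m}$ multilinearly via $H=\sum_k c_k$, and equate homogeneous components. Your remark on the $m=0$ term and the base case $c_1=A+B$ is a helpful clarification that the paper leaves implicit.
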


\begin{proof}
  Using the properties that the \(c_n\) are homogeneous in \(A\) and \(B\) and
  that \(\ad1=0\) we have
  \begin{eqnarray*}
    dH &=& \sum_{n\geq0} dc_n = \sum_{n\geq1} nc_n, \\
    \ad H &=& \sum_{k\geq1} \ad c_k;
  \end{eqnarray*}
  hence by Theorem~\ref{thm:dH}
  \begin{eqnarray*}
    dH &=& \left({\ad H\over2}\coth{\ad H\over2}\right)(A+B)
        - \left({\ad H\over2}\right)(A-B) \\
      &=& \sum_{m\geq0}{B_{2m}\over(2m)!}(\ad H)^{2m}(A+B)
        - \left({\ad H\over2}\right)(A-B) \\
      &=& \sum_{m\geq0} {B_{2m}\over(2m)!}
        \Bigl(\sum_{k\geq1}\ad c_k\Bigr)^{2m}(A+B)
          - \left({\ad H\over2}\right)(A-B) \\
  \noalign{\hbox{and so}}
      \sum_{n\geq0} (n+1)c_{n+1} &=& \sum_{n\geq1}
        \sum_{m=0}^{\lfloor n/2\rfloor} {B_{2m}\over(2m)!}\!
          \sum_{{k_1,\ldots,k_{2m}\geq1}\atop{k_1+\cdots+k_{2m}=n}}\!\!
            \ad c_{k_1}\ldots\ad c_{k_{2m}} (A+B)
              - \sum_{n\geq1} \half(\ad c_n)(A-B) \\
  \noalign{\hbox{therefore, equating terms of equal gradation,}}
      (n+1)c_{n+1} &=& \sum_{m=0}^{\lfloor n/2\rfloor} {B_{2m}\over(2m)!}\!
        \sum_{{k_1,\ldots,k_{2m}\geq1}\atop{k_1+\cdots+k_{2m}=n}}\!\!
          \ad c_{k_1}\ldots\ad c_{k_{2m}} (A+B)
            - \half(\ad c_n)(A-B).
  \end{eqnarray*}
\end{proof}

The first few terms in the Hausdorff series are
\begin{eqnarray}
  \ln(e^Ae^B) &=& \bigl\{A + B\bigr\} + \half[A,B]
      + \rational1{12} \Bigl\{[A,[A,B]] - [B,[A,B]]\Bigr\} \nonumber \\
    && - \rational1{24}[B,[A,[A,B]]] \nonumber \\
    && \begin{array}{lrlrll}
	 + \rational1{720}\Bigl\{ 
	   & - 4 & [B,[A,[A,[A,B]]]]
	   & - 6 & [[A,B],[A,[A,B]]] & \\
	 \vphantom{\Bigl\{} 
	   & + 4 & [B,[B,[A,[A,B]]]]
	   & - 2 & [[A,B],[B,[A,B]]] & \\
	 \vphantom{\Bigl\{}
	   & - & [A,[A,[A,[A,B]]]]
	   & + & [B,[B,[B,[A,B]]]] 
	   & \Bigr\} + \cdots
       \end{array}
  \label{eq:bch}
\end{eqnarray}
From this we easily obtain the formula for a symmetric product
\begin{eqnarray}
  \ln(e^{A/2}e^Be^{A/2}) &=& \bigl\{A + B\bigr\}
      - \rational1{24} \Bigl\{ 2 [B,[A,B]] + [A,[A,B]] \Bigr\} \nonumber \\
    && \begin{array}{lrlrll}
	 + \rational1{5760} \Bigl\{
	   & 32 & [B,[B,[A,[A,B]]]] & - 16 & [[A,B],[B,[A,B]]] & \\
	 \vphantom{\Bigl\{}
	   & + 28 & [B,[A,[A,[A,B]]]] & + 12 & [[A,B],[A,[A,B]]] & \\
	 \vphantom{\Bigl\{}
	   & + 8 & [B,[B,[B,[A,B]]]] & + 7 & [A,[A,[A,[A,B]]]] 
	   & \Bigr\} + \cdots
       \end{array}
  \label{eq:bchsym}
\end{eqnarray}
\end{widetext}

Since \(e^Ae^B = e^{A+B+\delta(A,B)}\) and \(e^{-B}e^{-A} = e^{-B-A+\delta(-B,
-A)}\) we see that \(\delta(A,B) = -\delta(-B,-A)\), so under interchange of
\(A\) and \(B\) all terms of even grading in change sign, whereas those of odd
grading do not.

\section{Higher-Order Symmetric Symplectic Integrators}\label{sec:hossi}

It was observed by Campostrini~\cite{campostrini89a,forest90b} that one can
construct higher-order integrators by the following method~\cite{creutz89a}:
since \(U_0(\dt) = e^{\dt\hat H} + R_0\dt^3 + O(\dt^5)\) we observe that the
``wiggle''
\begin{displaymath}
  U_0(\epsilon) U_0(-\sigma\epsilon) U_0(\epsilon)
    = e^{\epsilon(2-\sigma)\hat H} + R_0(2-\sigma^3)\epsilon^3 + O(\dt^5)
\end{displaymath}
can be adjusted to give an integration scheme correct to \(O(\dt^5)\) by
choosing \(\sigma=\root3\of2\).  The step size may be kept fixed by taking
\(\epsilon = \dt/(2 - \sigma)\).

Naturally, this wiggle may itself be iterated to give integration schemes of
arbitrarily high order, all of which are still reversible and area-preserving.
We use the recursive definition
\begin{eqnarray*}
    \lefteqn{ U_n(\epsilon_n) U_n(-\sigma_n\epsilon_n) U_n(\epsilon_n) } \qquad
    & & \\ 
    & = &  e^{\epsilon(2-\sigma_n)\hat H} +
    R_n(2-\sigma_n^{2n+1})\epsilon_n^{2n+1} 
      + O(\dt^{2n+3})
\end{eqnarray*}
and choose \(\sigma_n=\root{2n+1}\of2\) and \(\epsilon_n =\dt/(2 - \sigma_n)\).

It was noted by Yoshida~\cite{yoshida90a} that while the lowest-order
Campostrini scheme is optimal, in the sense of requiring the fewest integration
steps, the second order Campostrini wiggle is not.  The lowest order
Campostrini scheme corresponds to the operator
\begin{eqnarray*}
  \lefteqn{ e^{\epsilon Q/2} e^{\epsilon P} e^{\epsilon Q/2}
    e^{-\sigma\epsilon Q/2} e^{-\sigma\epsilon P} e^{-\sigma\epsilon Q/2}
    e^{\epsilon Q/2} e^{\epsilon P} e^{\epsilon Q/2} } \qquad & & \\
  & = & e^{\epsilon Q/2} e^{\epsilon P}
    e^{(1-\sigma)\epsilon Q/2} e^{-\sigma\epsilon P} e^{(1-\sigma)\epsilon Q/2}
    e^{\epsilon P} e^{\epsilon Q/2}
\end{eqnarray*}
which uses precisely seven steps.  The second order wiggle uses 19 steps, while
the 15 step operator
\begin{eqnarray*}
  && U_0(u\,\dt)\,U_0(v\,\dt)\,U_0(w\,\dt)\,U_0(x\,\dt)\,\\ [0.5ex]
  && U_0(w\,\dt)\,U_0(v\,\dt)\,U_0(u\,\dt) \\ [0.5ex]
  &&\qquad  = e^{u\,\dt Q/2} e^{u\,\dt P}
    e^{(u+v)\,\dt Q/2} e^{v\,\dt P}
    e^{(v+w)\,\dt Q/2} \\ 
    &&\qquad\qquad e^{w\,\dt P} e^{(w+x)\,\dt Q/2} e^{x\,\dt P}
    e^{(w+x)\,\dt Q/2} \\
    &&\qquad\qquad e^{w\,\dt P} e^{(v+w)\,\dt Q/2}
     e^{v\,\dt P} e^{(u+v)\,\dt Q/2}\\
    &&\qquad\qquad e^{u\,\dt P} e^{u\,\dt Q/2}
\end{eqnarray*}
has errors of \(O(\dt^7)\) if the ideal defined by
\begin{eqnarray*}
    x^5 + 2w^5 + 2 v^5 + 2u^5 &=& 0, \\
    x^3 + 2w^3 + 2v^3 + 2u^3 &=& 0, \\
    x + 2w + 2v + 2u &=& 1, \\
    4xwv^3 - 2xw^3v - 2x^3vu + 4xwu^3 - 4w^3vu && \\
    - 2x^3wu - 2xw^3u - 2x^3wv + 8wvu^3 - 2xv^3u \\
    + 4xvu^3 - x^4w + x^2w^3 - 4wv^3u - 2w^3v^2 \\
    - 2v^3u^2 - 2w^3u^2 + 2wu^4 + 2vu^4 + 4v^2u^3 \\
    - 4w^4u + 4w^2u^3 - 4v^4u + 2wv^4 - x^3u^2 \\
    + xu^4 - x^4u + x^2u^3 - 4w^4v + 4w^2v^3 \\
    + x^2v^3 - x^4v +  xv^4  - x^3v^2 - x^3w^2 + xw^4 &=& 0
\end{eqnarray*}
is not empty.  This is indeed the case, as a Gr\"obner basis computation shows
that \(v\), \(w\), and \(x\) may be expressed as polynomials in \(u\), which is
a root of the following irreducible polynomial
\begin{widetext}
\begin{displaymath}
  \begin{array}{l}
     5632424294400000000u^{39} - 92200336819200000000u^{38} \\
     + 710632361410560000000u^{37} - 3437629764814080000000u^{36} \\
     + 11745037928943360000000u^{35} - 30260229452421120000000u^{34} \\
     + 61344468339328512000000u^{33} - 100894346480650176000000u^{32} \\
     + 137871545973425856000000u^{31} - 159597428255349696000000u^{30} \\
     + 159057766014056179200000u^{29} - 138323253491741289600000u^{28} \\
     + 106099417410611328000000u^{27} - 72361810116050054400000u^{26} \\
     + 44125123305044761920000u^{25} - 24138408506765309280000u^{24} \\
     + 11867110408796028480000u^{23} - 5247011965321527840000u^{22} \\
     + 2086800152523757920000u^{21} - 746466059135744064000u^{20} \\
     + 240110266627607904000u^{19} - 69431877142547472000u^{18} \\
     + 18041618760883056000u^{17} - 4210061488653312000u^{16} \\
     + 881426634100156800u^{15} - 165335574305894400u^{14} \\
     + 27731884779770400u^{13} - 4148250096765600u^{12} \\
     + 551410054740000u^{11} - 64829840769360u^{10} \\
     + 6700635295200u^9 - 604016460000u^8 \\
     + 46995575760u^7 - 3112757280u^6 \\
     + 172255032u^5 - 7756920u^4 \\
     + 273360u^3 - 7080u^2 \\
     + 120u - 1.
  \end{array}
\end{displaymath}
This polynomial has three real roots, corresponding to the numerical solutions
found by Yoshida:
\begin{displaymath}
  \begin{array}{ll}
     u = 0.7845136104 7755726381 9497633866, &
     v = 0.2355732133 5935813368 4793198233, \\
     w = - 1.1776799841 7887100694 6415596562, &
     x = 1.3151863206 8391121888 4249687935;\\ [1.5ex]
     u = 1.4398481679 7678309093 0499281479, &
     v = 0.0042606818 7079201679 9146793392, \\
     w = - 2.1322852220 0145151552 3597811357, &
     x = 2.3763527443 0775282371 7294656042; \\ [1.5ex]
     u = 1.4477825623 9929793289 7896663298, &
     v = - 2.1440353163 0538931021 3622103136, \\
     w = 0.0015288622 8424927492 2787428878, &
     x = 2.3894477832 4368421218 6399178641.
   \end{array}
\end{displaymath}
\end{widetext}

\iftrue

\else
\bibliography{adk,lattice-bibliography,book}
\bibliographystyle{apsrev}
\fi

\end{document}